\definecolor{darkred}  {rgb}{0.5,0,0}
\definecolor{darkblue} {rgb}{0,0,0.5}
\definecolor{darkgreen}{rgb}{0,0.5,0}
\theoremstyle{definition}
\newtheorem{theorem}{Theorem}
\newtheorem{lemma}{Lemma}
\newtheorem*{lemma*}{Lemma}
\newtheorem{corollary}{Corollary}
\theoremstyle{remark}
\newlength{\fighskip} \fighskip=2pt
\newlength{\figvskip} \figvskip=3pt
\newcommand*{\figbox}[2]{{
		\def\figscale{#1}
		\def\arraystretch{0.8}
		\arraycolsep=0pt
		\begin{array}{c}
			\vbox{\vskip\figscale\figvskip
				\hbox{\hskip\figscale\fighskip
					\includegraphics[scale=\figscale]{#2}}}
\end{array}}}
\DeclareMathOperator{\Tr}{Tr}
\DeclareMathOperator{\disc}{disc}
\DeclareMathOperator{\supp}{supp}
\newcommand{\norm}[1]{\left\lVert #1 \right\rVert}
\newcommand{\refeq}[1]{Eq.~(\ref{#1})}
\newcommand{\ketbra}[1]{\ket{#1}\bra{#1}}
\theoremstyle{definition}
\newtheorem{result}{Result}
\newenvironment{result_restate}[2]
{
	\addtocounter{result}{-1}
	\begin{result}}
	{\end{result}}
\newenvironment{lemma_copy}[1]
{
	\addtocounter{lemma}{-1}
	\begin{lemma}}
	{\end{lemma}}
\begin{document}
	
	\title{How Much Entanglement Is Needed for Topological Codes and Mixed States with Anomalous Symmetry?}
	\newcommand*{\PI}{Perimeter Institute for Theoretical Physics, Waterloo, Ontario N2L 2Y5, Canada}
	\newcommand*{\UW}{Department of Physics and Astronomy, University of Waterloo, Waterloo, Ontario N2L 3G1, Canada}
	\newcommand*{\NRC}{National Research Council Canada, Waterloo, Ontario N2L 3W8, Canada}
	
	\author{Zhi Li}\affiliation{\PI}       \affiliation{\NRC}   
	\author{Dongjin Lee}\affiliation{\PI}
	\affiliation{\UW}
	\author{Beni Yoshida}\affiliation{\PI}

	\begin{abstract}
		It is known that particles with exotic properties can emerge in systems made of simple constituents such as qubits, due to long-range quantum entanglement. In this paper, we provide quantitative characterizations of entanglement necessary for emergent anyons and fermions by using the geometric entanglement measure (GEM), which quantifies the maximal overlap between a given state and any short-range-entangled states. For systems with emergent anyons, based on the braiding statistics, we show that the GEM scales linearly in the system size regardless of microscopic details. The phenomenon of emergent anyons can also be understood within the framework of quantum error correction (QEC). Specifically, we show that the GEM of any 2D stabilizer codes must be at least quadratic in the code distance. Our proof is based on a generic prescription for constructing string operators, establishing a rigorous and direct connection between emergent anyons and QEC. 
		For systems with emergent fermions, despite that the ground state subspaces could be exponentially huge and their coding properties could be rather poor, we show that the GEM also scales linearly in the system size. 
		Our analysis establishes an intriguing link between quantum anomaly and entanglement: A quantum state respecting anomalous $1$-form symmetries, must be long-range-entangled and have large GEM. Our results also extend to mixed states, such as the $ZX$-dephased toric code, providing a provably nontrivial class of intrinsically mixed-state phases.
		
	\end{abstract}
	\maketitle

	\section{Introduction}
	
	Quantum information processing is intrinsically vulnerable to noises and decoherence that may arise from physical interactions with the environment as well as engineering imperfections in handling qubits. 
	A useful way to fault-tolerantly store and process quantum information is to utilize particles with exotic statistical properties known as anyons \cite{leinaas1977theory,wilczek1982magnetic}, which provide ideal platforms with naturally arising topological protection from otherwise detrimental errors~\cite{Kitaev:1997wr,dennis2002topological}.
	While anyons may not naturally exist as elementary particles in nature, they do \emph{emerge} as low-energy excitations in strongly correlated many-body quantum systems---most notably in fractional quantum Hall states \cite{laughlin1983anomalous}---demonstrating their physical reality.
	Theoretical and experimental investigations of anyons remain one of the most active areas of research in modern physics.
	
	The emergent phenomenon of anyons is closely related to long-range entanglement in many-body systems~\cite{bravyi2006lieb,kitaev2006topological,levin2006detecting,zhang2012quasiparticle,kim2015ground,shi2020verlinde,shi2020fusion}.
	A vivid feature of quantum entanglement is the superposition: Construction of wave functions requires a large number of product basis states.
	For instance, consider a ground state of the 2D toric code. It can be expressed as a sum of exponentially many computational ($Z$-basis) states 
	\begin{align}
		|\Psi_{\text{toric}}\rangle = \frac{1}{\sqrt{2^{n/2}}} \prod_{\text{vertex}}
		(I + \figbox{0.22}{fig_Xstab} ) |0\rangle^{\otimes n} 
		\propto \sum_{\alpha \in \text{closed-loop}} | \alpha \rangle
	\end{align}
	where each term can be viewed as a closed-loop configuration by drawing a line connecting qubits in $|1\rangle$. 
	This characterization of the toric code is an example of the string-net condensation picture which hosts a wide generalization for nonchiral topological order~\cite{levin2005string}. 
	Anyonic excitations can be generally created at end points of an open string operator $\gamma$ applied to the entangled ground state.
	Nontrivial braiding statistics of anyons implies that the ground state cannot be prepared by a short-depth quantum circuit, and, thus, must be long-range-entangled~\cite{bravyi2006lieb,haah2016invariant,aharonov2018quantum}.
	
	The degree of superpositions in a given wave function $|\Psi\rangle$ can be quantitatively measured by the geometric entanglement measure (GEM) of the form~\cite{wei2003geometric, botero2007scaling, orus2008universal,orus2008geometric,wei2010entanglement, orus2014geometric}
	\begin{align}
		E(\Psi) = - \max_{ |P\rangle \, :\,  \text{product}} \log_2 |\langle \Psi | P \rangle|^2
	\end{align}
	which computes the maximum overlap between $|\Psi\rangle$ and product states. 
	Unfortunately, this measure fails to distinguish short-range and long-range entanglement, since short-range-entangled states, such as the cluster state, can score high on the GEM. 
	To circumvent this problem, the depth-$t$ geometric entanglement measure (depth-$t$ GEM) was proposed~\cite{bravyi2024much}
	\begin{align}
		E_{t}(\Psi) =  -\max_{U\, : \, \mathrm{depth}(U)= t}\;   \log_2{|\langle \Psi| U\ket{0}^{\otimes n}|^2}
	\end{align}
	which considers the maximum overlap between $|\Psi\rangle$ and all short-range-entangled states $U|0\rangle^{\otimes n}$.
	Here, the maximum is taken over all $n$-qubit geometrically local (e.g., brick wall) depth-$t$ quantum circuits; a depth-$0$ circuit is defined as a product of single-qubit gates, and, hence, $E(\Psi)=E_{0}(\Psi)$.
	
	This generalization of GEM provides a refined characterization of long-range entanglement. 
	According to the conventional definition, a quantum state $|\Psi\rangle$ is said to be long-range-entangled if it cannot be (approximately) prepared by a short-depth circuit. 
	In terms of the GEM, this simply amounts to $E_{t}(\Psi)\not=0$ [or $E_{t}(\Psi)>\epsilon$ if approximations are accounted] for small $t$.
	However, depending on its underlying entanglement structure, each long-range-entangled state $|\Psi\rangle$ can have strikingly different values of $E_{t}(\Psi)$.
	Some long-range-entangled states score very low on the GEM, suggesting that they have a significant overlap with short-range-entangled states despite their high quantum circuit complexity. 
	On the contrary, genuinely long-range-entangled states score very high on the GEM, 
	suggesting that they not only cannot be approximated by short-range-entangled states, but also must be almost orthogonal to any short-range-entangled states. 
	For instance, the Greenberger-Horne-Zeilinger (GHZ) state $\frac{1}{\sqrt{2}}(|0\rangle^{\otimes n} + |1\rangle^{\otimes n})$ is long-range-entangled, since it cannot be prepared from a product state by a constant depth circuit. 
	However, the GHZ state scores small for depth-$t$ GEM, namely, $E_{t}(\text{GHZ}) \sim O(1)$ for any $t \sim O(1)$. 
	This aligns with our intuition concerning the intrinsic instability of entanglement in the GHZ state. 
	On the other hand, as we prove later in this paper, the toric code has $E_{t}(\text{toric})\sim O(n)$ for $t \sim O(1)$, reflecting its complex entanglement structure. 
	
	In this work, we use the depth-$t$ GEM as a more direct, quantitative characterization of quantum entanglement that is needed for emergent anyons and other exotic physical objects. 
	
	\subsection{Emergent anyons}
	
	In this paper, we begin by studying two-dimensional spin systems whose ground states support anyonic excitations with nontrivial braiding statistics. 
	While we focus on the toric code for simplicity of presentation, similar results can be generically derived. We use $n$, $k$, and $d$ to refer the number of spins, the logarithmic ground state degeneracy, and the code distance, respectively. 
	
	\begin{result}\label{result:toric}
		A ground state $|\Psi\rangle$ of the toric code satisfies
		\begin{align}
			E_{t}(\Psi) =\Omega(n) ,
			\label{eq:result}
		\end{align}
		where\footnote{In this paper, we use standard Bachmann-Landau notation~\cite{knuth1976big} for $\Omega$, $\Theta$, and $O$ symbols. For instance, \refeq{eq:result} implies that there exists $\alpha, n_0>0$ such that $E_{t}(\Psi)  >\alpha n$ for all $n>n_0$.}
		the multiplicative coefficient depends only on $t$.
	\end{result}
	
	The proof involves several steps, but the underlying physical picture is simple enough to summarize here.  
	Recall that the toric code supports two types of Abelian anyonic excitations, $e$ and $m$, that can be created at end points of stringlike operators. 
	Such operators can be deformed while fixing the end points and create the same sets of anyons. 
	The braiding statistics between $e$ and $m$ can be probed as follows:
	\begin{align}\label{eq:introbraid}
		\langle \Psi| \gamma_m^{\dagger} \gamma_e \gamma_m | \Psi\rangle = \figbox{0.25}{fig_Smatx}  = -1.
	\end{align}
	Here $\gamma_m$ represents an open $m$ string, and $\gamma_e$ represents a closed $e$ loop; in the middle is a spacetime diagram illustrating the braiding. 
	The key observation is that due to the deformability of $\gamma_m$ string, the region that supports the pair of $m$ anyons must be decoupled from the rest of the system if $\ket{\Psi}$ was short-range entangled, so that the $\gamma_e$ loop ``does not know" whether there was or was not an $m$ anyon inside. Therefore, $\langle \Psi | \gamma_m^{\dagger} \gamma_e \gamma_m | \Psi \rangle =1$.
	This contradicts the braiding statistics, which proves that the toric code ground states must be long-range-entangled. 
	The above beautiful argument is due to Bravyi (unpublished) and was described in Ref.~\cite{aharonov2018quantum}.
	
	To prove the claimed linear lower bound on $E_{t}(\Psi)$, we consider $O(n)$ copies of braiding processes in nonoverlapping subregions where each process contributes $O(1)$ to $E_{t}(\Psi)$. 
	This argument can be readily extended to ground states of any spin systems that support nontrivial anyonic excitations. 
	Namely, we can show that the nontrivial braiding $\langle \gamma_a^{\dagger} \gamma_b \gamma_a \rangle \not = 1$ for two anyons $a$ and $b$ is inconsistent with short-range entanglement, and as such, a similar lower bound on $E_{t}(\Psi)$ follows. 
	
	\subsection{Emergent fermions}
	
	Next, we turn our attention to systems that support emergent fermions. 
	To be concrete, we focus on the following variant of the Kitaev honeycomb model~\cite{kitaev2006anyons} (similar results can be derived in more general settings):
	\begin{align}\label{eq-Hfermion6}
		H_{\text{fermion}} = - \sum_{\hexagon} S_{\hexagon}, \qquad S_{\hexagon} = \figbox{0.25}{fig_honeycombstab.pdf}\ .
	\end{align}
	This Hamiltonian naturally emerges when viewing Kitaev's honeycomb model as a subsystem code and then considering the stabilizer group \cite{suchara2011constructions,ellison2023pauli}. 
	Viewing \refeq{eq-Hfermion6} as a stabilizer code, a ground state satisfies $S_{\hexagon}|\Psi\rangle = + |\Psi\rangle$, and the code parameters are $k \sim n/2$ and $d=2$. 
	Despite the exponentially large code subspace $2^k$ and the small code distance $d$, a linear lower bound, which is applicable to \emph{arbitrary} ground state, can be proven. 
	
	\begin{result}\label{res:fermion}
		A ground state $|\Psi\rangle$ of the honeycomb model satisfies 
		\begin{align}
			E_{t}(\Psi) =\Omega(n),
		\end{align}
		where the multiplicative coefficient depends only on $t$.
	\end{result}
	
	Unlike anyons in the toric code, fermions have trivial braiding statistics (i.e., they are not anyons). Hence, our proof utilizes the fermion exchange statistic crucially. 
	Before proceeding, however, it is worth emphasizing an important difference between how anyons and fermions emerge in the toric code and the honeycomb model, respectively. 
	In the toric code, open string anyon creation operators do not commute with the Hamiltonian; hence, anyons are low-energy excitations \emph{outside} the ground state subspace. 
	On the contrary, in the honeycomb model, the fermion creation operators do commute with the Hamiltonian, and, hence, fermions emerge \emph{inside} the ground state subspace. 
	More precisely, note the honeycomb model possesses the following two-body logical operators
	\begin{align}\label{eq:XXYYZZ}
		\figbox{0.3}{fig_2bdyX.pdf}\qquad
		\figbox{0.3}{fig_2bdyY.pdf}\qquad
		\figbox{0.3}{fig_2bdyZ.pdf}
	\end{align}
	that commute with stabilizers $S_{\hexagon}$. 
	Let us pick an arbitrary ground state $|\Psi\rangle$ and regard it as a ``vacuum'' state with no fermions. 
	Consider an open string operator $M$ which is generated from two-body logical operators. 
	Then, $M|\Psi\rangle$ can be interpreted as a state with two emergent fermions at end points of an open string $M$. 
	The crucial point is that $M|\Psi\rangle$ still lives in the energy ground space, as $M$ is a logical operator of the code.

	The particle exchange statistics can be probed by considering three stringlike operators $M_1, M_2$ and $M_3$ that share a common end point \cite{levin2003fermions}:
	\begin{eqnarray}
		\figbox{0.3}{fig_mutual} \qquad  
		\bra{\Psi}M_3^\dagger M_2^\dagger M_1^\dagger M_3M_2M_1\ket{\Psi}= e^{i \theta},
	\end{eqnarray}
	where $e^{i \theta}=-1$ for fermions. 
	In the honeycomb model, this relation can be explicitly verified, as three stringlike logical operators anticommute with each other at the overlapping qubit. 
	Using a variant of Bravyi's idea, it can be shown that short-range entanglement and the deformability of stringlike operators would imply $\bra{\Psi}M_3^{\dagger} M_2^{\dagger}M_{1}^{\dagger} M_3 M_2 M_1\ket{\Psi} = 1$, contradicting with the exchange statistics. 
	Then, with some extra work, the claimed linear lower bound on $E_{t}(\Psi)$ can be obtained. 
	
	\subsection{Anomalous symmetry and Mixed state}
	
	The aforementioned result on emergent fermions has interesting implications from the perspectives of anomalous symmetries and mixed states.

	\subsubsection{Anomalous symmetries}
	Recently, there has been considerable progress in understanding the relation between quantum entanglement and anomalous symmetries. 
	To motivate this viewpoint, consider a system with conventional symmetries, such as a $\mathbb{Z}_2$ global symmetry generated by $S = \otimes_{j=1}^nX_{j}$.
	The system is equipped with well-defined local $\mathbb{Z}_2$ charge generators which are simply $X_{j}$'s. 
	The symmetry constraint of the form $S|\Psi\rangle=|\Psi\rangle$ then dictates that the total $\mathbb{Z}_2$ charge in $|\Psi\rangle$, expressed as a product of local $\mathbb{Z}_2$ generators $X_j$, must be trivial ($+1$). 
	Furthermore, the system has a natural ``vacuum'' state, namely $|+\rangle^{\otimes n}$ for which local generators $X_j$ act trivially, and, thus, is not entangled.  
	
	On the contrary, anomalous symmetries do not admit locally definable charge generators. 
	A prototypical example of this phenomenon is a 1D $\mathbb{Z}_2$ anomalous global symmetry, often called the C$ZX$ symmetry $S_{\text{C}ZX}$~\cite{chen20122d}, which is generated by 
	\begin{align}
		S_{\text{C}ZX} = S_{\text{C}Z}S_{X} ,
	\end{align}
	where $S_{\text{C}Z} = \prod_j \text{C}Z_{j,j+1}, \; S_{X} = \prod_j X_{j}$. This operator defines a global $\mathbb{Z}_2$ symmetry as $S_{\text{C}ZX}^2 = I$ (for even $n$) but does not possess local generators. 
	For instance, one might want to consider $\text{C}Z_{j,j+1} X_{j}$ as a local generator. 
	Unfortunately, a product of $\text{C}Z_{j,j+1} X_{j}$ does not generate $S_{\text{C}ZX}= S_{\text{C}Z}S_{X}$. 
	Furthermore, $\text{C}Z_{j,j+1} X_{j}$ does not commute with $S_{\text{C}ZX}$, and also they do not commute with each other. 
	As such, consistent meaning as local charge generators cannot be given to them. 
	
	This impossibility of decomposing an anomalous symmetry operator into local charge generators is closely related to quantum entanglement. 
	This viewpoint has been extensively studied from the perspective of symmetry-protected topological (SPT) order, as the C$ZX$ symmetry emerges at the boundary of a 2D $\mathbb{Z}_{2}$ SPT phase~\cite{chen20122d}. 
	Recently, it has been rigorously shown that any wave functions satisfying $S_{\text{C}ZX}|\Psi\rangle = |\Psi\rangle$ must be multipartite entangled \cite{lessa2024mixed}.
	
	Returning to the honeycomb model, the upshot is that the hexagonal stabilizers $S_{\hexagon}$ are another example of anomalous symmetries.
	More precisely, these are $1$-form anomalous symmetries \cite{verresen2022unifying,ellison2023pauli,inamura2023fusion,liu2024symmetries} since $S_{\hexagon}$ acts as a codimension-$1$ operator instead of a global codimension-$0$ operator as in the C$ZX$ model. 
	The anomalous nature of $S_{\hexagon}$ symmetries can be seen from the fact that they cannot be cut into two pieces. For example, suppose we cut $S_{\hexagon}$ vertically into two parts and write it as a product of the left and right parts, $S_{\hexagon} = S_{\hexagon}^{(L)}S_{\hexagon}^{(R)}$.
	We immediately notice that $S_{\hexagon}^{(L)}$ and $S_{\hexagon}^{(R)}$ do not commute with some of other hexagonal stabilizers, and, thus, are not well defined as local generators.
	Instead, one might decompose $S_{\hexagon}$ as a product of two-body logical operators from Eq.~(\ref{eq:XXYYZZ}).
	While this decomposition consists of operators that commute with $S_{\hexagon}$ symmetries, two-body operators anticommute with each other, and, hence, do not admit interpretations as local generators.
	
	Our result on the honeycomb model can be rephrased as follows.
	\begin{result_restate}{res:fermion}{$'$}[Anomalous symmetry]
		Any state $|\Psi\rangle$ satisfying anomalous $1$-form symmetries $ S_{\hexagon}|\Psi\rangle=|\Psi\rangle$ in the honeycomb model must be long-range-entangled. Furthermore they satisfies $E_{t}(\Psi) =\Omega(n)$.
	\end{result_restate}

	While wave functions with anomalous global symmetries are long-range-entangled in general, they may score very low in terms of the GEM. For instance, the GHZ state $\frac{1}{\sqrt{2}}(|0\rangle^{\otimes n} + |1\rangle^{\otimes n}) $ satisfies the C$ZX$ symmetry, but it has $E_0(\text{GHZ})=1$.
	This is in contrast with anomalous $1$-form symmetries, such as $S_{\hexagon}$ symmetries, where the GEM scales linearly with respect to $n$.

	\subsubsection{Mixed states}
	Characterization of quantum entanglement for mixed states is a considerably harder question compared to its counterpart for pure states. 
	Quantum circuit complexity of preparing a mixed state has been studied in quantum information theory in the context of topological quantum memory at finite temperature~\cite{dennis2002topological,hastings2011topological}. 
	Recently, there has been renewed interest in the possibility of novel quantum phases, that are genuinely intrinsic to mixed state, in the condensed matter community.  
	Generally, a mixed state $\rho$ is said to be short-range-entangled if $\rho$ can be expressed as a probabilistic ensemble of short-range-entangled states~\cite{hastings2011topological}:
	\begin{align}
		\rho = \sum_{j}p_{j}|\psi_j\rangle \langle \psi_j|, \qquad |\psi_j\rangle = U_j |0\rangle^{n},
	\end{align}
	where $U_j$ are short-depth circuits.  
	A naturally arising question is whether a mixed state $\rho$ in the ground state space of the honeycomb model is long-range-entangled or not. 
	Evidence for long-range entanglement in $\rho$ was reported in Ref.~\cite{wang2023intrinsic}, where a subleading contribution in the entanglement negativity \cite{lu2020detecting}, similar to topological entanglement entropy, was found.
	Further connections between anomalous 1-form symmetry and mixed-state topological order were discussed in Refs.~\cite{sohal2024noisy,ellison2024towards}.
	Here, as a simple corollary of the aforementioned result, we can prove that $\rho$ in the honeycomb model is indeed long-range-entangled. 
	
	\begin{result_restate}{res:fermion}{$''$}[Mixed state]
		Any mixed state $\rho$ satisfying the hexagonal constraints $\Tr( \rho S_{\hexagon})=1$ in the honeycomb model cannot be written as an ensemble of short-range-entangled states. 
	\end{result_restate}
	
	In fact, we prove a stronger statement by generalizing the depth-$t$ GEM to the mixed-state setting where the maximum fidelity between the target state and all short-range-entangled mixed states is considered. 
	Namely, we prove that a similar linear lower bound on the mixed-state GEM holds for any symmetric $\rho$. 
	
	Lastly, we note that the results also apply to other models of mixed states with anomalous 1-form symmetry, such as the $ZX$-dephased toric code \cite{ellison2023pauli}.

	\subsection{2D stabilizer codes}
	
	The phenomenon of emergent anyons can be understood within the framework of quantum error correction (QEC) where quantum information is stored in the energy ground state space. 
	This motivates us to ask whether there is a fundamental lower bound on the necessary quantum entanglement for storing quantum information in a bounded volume of space. 
	
	In this paper, we reveal a universal lower bound on $E_{t}(\Psi)$ for certain two-dimensional QEC systems. 
	Specifically, we consider a system of $n$ qubits placed on a 2D lattice and focus on codes for which the code subspace $\mathcal{C}$ is defined by geometrically local stabilizer generators $S_{j}$: The code subspace $\mathcal{C}$ can be realized as the energy ground subspace of a local gapped Hamiltonian 
	\begin{align}
		H = - \sum_{j} S_{j}.
	\end{align}
	The QEC capability is usually measured by its code distance. A code is said to have code distance $d$ if logical states are locally indistinguishable on any subset of less than $d$ physical qubits. 
	
	Our main result is the following lower bound on the GEM. 
	
	\begin{result}\label{result:2dstabcode}
		For a 2D geometrically local stabilizer code, if the code distance $d>d_0$, then any logical state $|\Psi\rangle$ satisfies 
		\begin{align}
			E_{t}(\Psi) =\Omega(d^2).
		\end{align}
		Here, both the multiplicative constant and $d_0$ depend only on $t$ and the size of stabilizer generators. 
	\end{result}
	
	For the toric code, $d=\Theta(\sqrt{n})$; hence, Result \ref{result:2dstabcode} is in accordance with Result \ref{result:toric}. There is, however, a crucial difference.
	The starting point of Result \ref{result:toric} is the emergent anyons, which exist on any geometries including the sphere. 
	The starting point of Result \ref{result:2dstabcode}, on the other hand, is the QEC property which relies on the ground state degeneracy.
	Therefore, Result \ref{result:2dstabcode} does not apply to the toric code on the sphere \emph{per se}. 
	
	Nevertheless, the proof of result \ref{result:2dstabcode} reflects a rigorous and direct connection between QEC and emergent anyons. 
	The main idea is that we can derive the existence of emergent anyons \emph{solely} based on the error correction property. 
	More precisely, as long as the code distance $d$ is sufficiently larger than the size of local stabilizer generators, we can always construct looplike operators $\gamma_b$ that create only syndrome at end points, as well as stringlike stabilizer operators $\gamma_a$, such that $\braket{\gamma_a^\dagger\gamma_b\gamma_a}=-1$. 
	Key technical tools in the constructions of loop and stringlike operators are the cleaning lemma \cite{bravyi2009no} and the disentangling lemma \cite{bravyi2010tradeoffs} which allows us to identify a pair of anticommuting logical operators. 
	Specifically, these logical operators can be designed to be supported on unions of linelike subregions. 
	We then introduce a simple but powerful truncation procedure of stabilizer operators which enables us to obtain the braiding configuration similar to Eq.~(\ref{eq:introbraid}).
	Such braiding statistics is not compatible with short-range entanglement, which ultimately leads to the desired linear lower bound on the GEM.
	
	The anyon braiding statistics then simply follow by interpreting the syndromes at the end of string operators as anyons.
	From this point of view, our result implies that anyons always emerge from QEC in 2D stabilizer codes, namely, from the sole assumption of the code distance. 
	
	\subsection{Relation to previous work}
	
	In a previous work, we derived generic lower bounds on depth-$t$ GEM for QEC codes in several settings~\cite{bravyi2024much}. 
	The most relevant to the present paper is the following universal lower bound for a quantum low-density parity-check (LDPC) code with the code distance $d$: 
	\begin{align}
		E_{t}(\Psi) =\Omega(d),
	\end{align}
	where the multiplicative coefficient depends only on the sparsity of the qLDPC code and the depth $t$.
	The linear scaling with $d$ cannot be improved further, as there exist qLDPC codes with the distance $d\sim n$. 
	This lower bound, however, is not tight for quantum codes supported on lattices with geometrically local projectors. 
	For instance, the two-dimensional toric code has $E_{0}(\Psi)\sim n/2$, but its code distance is $d \sim \sqrt{n}$. 
	Also, the above lower bound is not applicable to the toric ``code'' without logical qubits (e.g., supported on a sphere) as its derivation crucially depends on the presence of logical qubits. 
	Finally, the above bound is not particularly useful in studying long-range entanglement in the honeycomb model as its code distance is $d=2$.
	
	Our results in the present work improve the above qLDPC lower bound in various ways by focusing on two-dimensional systems. 
	Result~\ref{result:toric} proves that the quadratic scaling $E_t(\text{toric})=\Omega(d^2)$ [$t\sim O(1)$] for the toric code. 
	Result~\ref{result:2dstabcode} further improves this characterization for the cases of 2D local stabilizer codes by proving $E_{t}(\Psi)=\Omega(d^2)$. 
	Result~\ref{res:fermion} proves that, despite the poor coding properties, all the code word states in the honeycomb model are long-range-entangled with $E_{t}(\Psi) =\Omega(n)$.
	
	\subsection{Plan of the paper}
	
	This paper is organized as follows.
	In Sec.~\ref{sec:toric}, we study the toric code as a prototypical example of systems with emergent anyons.
	In Sec.~\ref{sec:storage}, we will derive a generic lower bound on the GEM for arbitrary 2D local stabilizer codes. 
	In Sec.~\ref{sec:fermion}, we will study the honeycomb model and show that wave functions with anomalous $1$-form symmetries have $E_t{(\Psi)}=\Omega(n)$.  
	In Sec.~\ref{sec:discussion}, we conclude with brief discussions.
	
	\section{2D toric code}\label{sec:toric}
	
	We prove the following linear lower bound on  $E_{t}(\Psi)$ for a toric code ground state.
	
	\begin{theorem}\label{thm:toriccode}
		There exists an absolute constant $\alpha>0$ ($\alpha=10^{-4}$ suffices) such that for any toric code ground state $|\Psi \rangle$, if $t\leq O(\sqrt{n})$, then
		\begin{equation}
			E_{t}(\Psi) > \frac{\alpha n}{(t+1)^2} \ .
		\end{equation}
	\end{theorem}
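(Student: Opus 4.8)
The plan is to show that every depth-$t$ state $\ket{\Phi}=U\ket{0}^{\otimes n}$ has exponentially small overlap with $\ket{\Psi}$, namely $|\langle\Psi|\Phi\rangle|^2\le 2^{-\beta n/(t+1)^2}$, since by definition $E_t(\Psi)=-\max_U\log_2|\langle\Psi|U\ket{0}^{\otimes n}|^2$. First I would tile the torus into $N=\Theta(n/(t+1)^2)$ disjoint square cells $R_1,\dots,R_N$, each of linear size $\ell=C(t+1)$ for a large universal constant $C$. The scale separation $\ell\gg t$ is the crucial design choice: it produces a buffer much wider than the light cone of $U$, which is ultimately what converts a single braiding process into a constant multiplicative suppression. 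The condition $t\le O(\sqrt n)$ is exactly what guarantees $N\ge 1$.

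In each cell I would place a braiding gadget supported entirely in $R_i$: a contractible closed $e$-loop $\gamma_e^{(i)}$ at radius $\sim\ell/2$, which is a product of plaquette stabilizers so that $\gamma_e^{(i)}\ket{\Psi}=\ket{\Psi}$, together with an open $m$-string $\gamma_m^{(i)}$ running from deep inside the loop to a point outside it but still inside $R_i$, crossing $\gamma_e^{(i)}$ once so that $\gamma_e^{(i)}\gamma_m^{(i)}=-\gamma_m^{(i)}\gamma_e^{(i)}$. This reproduces the braiding of \refeq{eq:introbraid} locally, $\langle\Psi|(\gamma_m^{(i)})^\dagger\gamma_e^{(i)}\gamma_m^{(i)}|\Psi\rangle=-1$. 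Because the commuting loops $\gamma_e^{(i)}$ all stabilize $\ket{\Psi}$, the overlap obeys the projector inequality $|\langle\Psi|\Phi\rangle|^2\le\langle\Phi|\prod_i\tfrac{I+\gamma_e^{(i)}}{2}|\Phi\rangle$, using that $\ket{\Psi}$ lies in the common $+1$ eigenspace of the $\gamma_e^{(i)}$.

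Next I would exploit locality. Conjugation by the depth-$t$ circuit fattens each operator by at most $t$, so $U^\dagger\gamma_e^{(i)}U$ is supported in the $t$-neighborhood of $\gamma_e^{(i)}$, which stays inside $R_i$ because $\ell\gg t$. Since the fattened supports in distinct cells are disjoint and $\ket{0}^{\otimes n}$ is a product state, the expectation factorizes exactly, $\langle\Phi|\prod_i\tfrac{I+\gamma_e^{(i)}}{2}|\Phi\rangle=\prod_i\tfrac{1+\langle\Phi|\gamma_e^{(i)}|\Phi\rangle}{2}$. I emphasize that this step never uses $U$ to be Clifford: it relies only on the exact light-cone support of finite-depth local circuits, not on Pauli conjugation. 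The heart of the proof is then a single-cell bound, $\langle\Phi|\gamma_e^{(i)}|\Phi\rangle\le 1-\delta$ for a universal $\delta>0$, which is the quantitative form of Bravyi's argument: because the annulus carrying $\gamma_e^{(i)}$ is buffered from both the enclosed $m$-anyon and the cell exterior by a margin $\gg t$, a state preparable within a light cone of radius $t$ cannot make the loop robustly detect the enclosed anyon, and so cannot sit in the exact $+1$ eigenspace of $\gamma_e^{(i)}$; deforming $\gamma_m^{(i)}$ while fixing its endpoints shows that a perfect $\langle\gamma_e^{(i)}\rangle=1$ is incompatible with the anticommutation $\gamma_e^{(i)}\gamma_m^{(i)}=-\gamma_m^{(i)}\gamma_e^{(i)}$ and the braiding value $-1$.

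Finally I would combine the factors: $|\langle\Psi|\Phi\rangle|^2\le(1-\delta/2)^N=2^{-\Omega(N)}=2^{-\Omega(n/(t+1)^2)}$, which after taking $-\log_2$ and maximizing over $U$ yields $E_t(\Psi)=\Omega(n/(t+1)^2)$ with the stated absolute constant. I expect the main obstacle to be precisely the single-cell bound $\langle\Phi|\gamma_e^{(i)}|\Phi\rangle\le 1-\delta$. The subtlety is that one cell is topologically trivial, a disk carries no ground-state degeneracy, so $\delta$ cannot come from a naive mismatch of reduced states; it must be extracted from the deformability of the open $m$-string together with the scale separation $\ell\gg t$, making rigorous the statement that a shallow circuit cannot reproduce a braiding process occurring on a scale larger than its own light cone. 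Turning this decoupling heuristic into a clean, $t$-independent constant $\delta$ is the technical core, while the tiling, the light-cone factorization, and the multiplicative combination are comparatively routine bookkeeping.
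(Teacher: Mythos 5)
Your global bookkeeping (tiling into $\Theta(n/(t+1)^2)$ cells, light-cone disjointness, multiplicativity over cells) is sound and runs parallel to the paper's strategy, but the step you yourself flag as the technical core --- the single-cell bound $\langle\Phi|\gamma_e^{(i)}|\Phi\rangle\le 1-\delta$ for every depth-$t$ state $|\Phi\rangle$ --- is not merely hard, it is false. The loop $\gamma_e^{(i)}$ is a product of Pauli $Z$'s (a product of plaquette stabilizers), so $|0\rangle^{\otimes n}$, a depth-$0$ state, is an exact $+1$ eigenstate of every $\gamma_e^{(i)}$ and your product of per-cell factors equals $1$. The root cause is that your projector inequality $|\langle\Psi|\Phi\rangle|^2\le\langle\Phi|\prod_i\tfrac{I+\gamma_e^{(i)}}{2}|\Phi\rangle$ retains from $|\Psi\rangle$ only the fact that it is stabilized by the $e$-loops, and the joint $+1$ eigenspace of contractible $e$-loops contains product states. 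The $m$-strings, the deformability relation $\gamma_m'|\Psi\rangle=\gamma_m|\Psi\rangle$, and the braiding value $-1$ are all properties of $|\Psi\rangle$; they impose no constraint whatsoever on $|\Phi\rangle$, so the ``cannot robustly detect the enclosed anyon'' heuristic never attaches to the object you are bounding.

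The paper avoids this by keeping the expectation values on the code state rather than on the trial state. Its local lemma (Lemma \ref{lemma:step2toric}) shows that the reduced density matrix $\rho_{R_i}$ of the \emph{deformed toric code state} $U_t|\Psi\rangle$ satisfies $\mathcal{F}(\rho_{R_i},|0_{R_i}\rangle)<1-\epsilon'$, by running Bravyi's decoupling argument quantitatively: if $\rho_{R_i}$ were $\epsilon$-close to $|0_{R_i}\rangle\langle 0_{R_i}|$, the deformability of the dressed $\gamma_m$ would force the $m$-pair to approximately factor out, whence $\Tr(\gamma_e\gamma_m\rho_R\gamma_m^\dagger)\approx+1$, contradicting Eq.~(\ref{eq:braiding}). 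The per-cell bounds are then multiplied not via a product structure of $|0\rangle^{\otimes n}$ but via the decoupling property $\rho_R=\otimes_i\rho_{R_i}$ of the code state itself, which follows from correctability of each separated cell, together with monotonicity and multiplicativity of fidelity. If you want to salvage a projector-based argument on $|\Phi\rangle$, the projector must encode enough of the code's local structure (e.g., both $e$- and $m$-type constraints) that a shallow state cannot satisfy it on a scale $\ell\gg t$; a single $Z$-loop per cell cannot do this.
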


	\label{sec:braiding}
	\subsection{Anyon braiding statistics}
	
	Recall that the toric code Hamiltonian is given by 
	\begin{align}
		H_{\text{toric}} = - \sum_{+}A_{+}- \sum_{\square}B_{\square}, 
	\end{align}
	where $A_{+} = \prod_{j\in+}X_{j}$ and $  B_{\square} = \prod_{j\in \square}Z_{j}$.  Qubits reside on edge of an $L\times L$ square lattice, and $+$ ($\square$) denote vertices (plaquettes). 
	A ground state $|\Psi\rangle$ satisfies $A_{+}|\Psi\rangle=B_{\square}|\Psi\rangle=|\Psi\rangle$.
	
	Microscopic details of the construction, however, are not important. 
	A crucial property we use for the proof is the nontrivial braiding statistics of anyons of the toric code. 
	Namely, consider a pair of a closed string operator $\gamma_e$ and an open string operator $\gamma_m$ which consist of Pauli $Z$ and $X$ operators respectively (Fig.~\ref{fig:dressed}). Considering a process of braiding $e$ around $m$, we find
	\begin{align}
		\bra{\Psi}\gamma_{m}^\dagger\gamma_{e}\gamma_{m}|\Psi\rangle = -1. \label{eq:braiding}
	\end{align}
	This braiding statistics is stable under deformation by a constant depth circuit $U_t$.
	Namely, for a deformed toric code ground state $|\Psi_t\rangle \equiv U_t|\Psi\rangle$, define dressed string operators by $\widetilde{\gamma_m} = U_t \gamma_m U^{\dagger}_t$ and $\widetilde{\gamma_e} = U_t \gamma_e U^{\dagger}_t$. 
	Equation (\ref{eq:braiding}) still holds for $|\Psi_t\rangle$ and  $\widetilde{\gamma_m}, \widetilde{\gamma_e}$, but dressed string operators $\widetilde{\gamma_m}, \widetilde{\gamma_e}$ now have width $O(t)$.  
	One can then consider a quasilocal anyon braiding process as in Fig.~\ref{fig:dressed} by taking a larger region of linear size $t$. 
	
	\subsection{Long-range entanglement}\label{sec:TCLRE}
	
	We first prove that a toric code ground state is long-range-entangled. 
	
	\begin{lemma}\label{lemma:TCLRE}
		Given a toric code ground state $|\Psi\rangle$, we have
		\begin{align}
			U_t|\Psi\rangle \not=  \ket{0}^{\otimes n} \label{eq:step1}
		\end{align}
		for any depth-$t$ unitary $U_t$ with constant $t$.
	\end{lemma}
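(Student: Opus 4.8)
The plan is to argue by contradiction: suppose $U_t\ket{\Psi}=\ket{0}^{\otimes n}$, so that $\ket{\Psi}=U_t^\dagger\ket{0}^{\otimes n}$ is short-range-entangled. Following the setup already introduced, I would conjugate the string operators through the circuit and work with the dressed operators $\widetilde{\gamma_e}=U_t\gamma_e U_t^\dagger$ and $\widetilde{\gamma_m}=U_t\gamma_m U_t^\dagger$. Since $U_t$ has depth $t$ its lightcone has radius $O(t)$, so $\widetilde{\gamma_e}$ is supported on an annular neighborhood $A$ of the loop $\gamma_e$ of width $O(t)$, and any dressed open string is supported on a thickened path of width $O(t)$. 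Two facts then transfer to the product state: (i) $\widetilde{\gamma_e}\ket{0}^{\otimes n}=\ket{0}^{\otimes n}$, because $\gamma_e$ is a product of plaquette stabilizers and hence $\gamma_e\ket{\Psi}=\ket{\Psi}$; and (ii) $\widetilde{\gamma_e}$ and $\widetilde{\gamma_m}$ anticommute, since the bare loop and string cross once. Writing $\ket{\Phi}\defeq\widetilde{\gamma_m}\ket{0}^{\otimes n}$, facts (i)--(ii) give $\widetilde{\gamma_e}\ket{\Phi}=-\widetilde{\gamma_m}\widetilde{\gamma_e}\ket{0}^{\otimes n}=-\ket{\Phi}$, so $\bra{\Phi}\widetilde{\gamma_e}\ket{\Phi}=-1$, which is just the braiding statistic \eqref{eq:braiding}.

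Next I would exploit the deformability of the $m$-string. On the ground state two open $X$-strings $\gamma_m,\gamma_m'$ with the same endpoints differ by a product of star operators $A_+$, so $\gamma_m'\ket{\Psi}=\gamma_m\ket{\Psi}$; conjugating, every routing of the string produces the \emph{same} dressed state $\ket{\Phi}$, independent of how it is drawn. I would use this freedom to route the string so that it crosses $\gamma_e$ exactly once, at a gate $G$, and stays at distance $\gg t$ from the loop everywhere else (its endpoints lying far inside and far outside the disk bounded by $\gamma_e$). For such a routing the dressed string $\widetilde{\gamma_m}^{(G)}$ acts as the identity on $A\setminus N_G$, where $N_G\subset A$ is an $O(t)$-neighborhood of the crossing.

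To close the argument I would show that $\bra{\Phi}\widetilde{\gamma_e}\ket{\Phi}$ must in fact equal $+1$ --- the precise content of ``the $\gamma_e$ loop does not know whether there is an $m$ anyon inside.'' Because $\ket{0}^{\otimes n}$ is a product state and $\widetilde{\gamma_m}^{(G)}$ is the identity on $A\setminus N_G$, the state $\ket{\Phi}$ factorizes as $\ket{0}_{A\setminus N_G}\otimes\ket{\chi_G}$ across the cut $(A\setminus N_G)\,|\,(A\setminus N_G)^c$, where $\ket{0}_{A\setminus N_G}$ is the restriction of $\ket{0}^{\otimes n}$; hence $\rho^\Phi_A=\rho^0_{A\setminus N_G}\otimes\rho^\Phi_{N_G}$, with $\rho^0$ denoting reduced states of $\ket{0}^{\otimes n}$. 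I would then pick two gates $G_1,G_2$ on opposite sides of the loop with $N_{G_1}\cap N_{G_2}=\emptyset$, possible because the loop is large compared to $t$; by deformation invariance both routings give the same $\ket{\Phi}$. The factorization for $G_2$ forces the marginal of $\ket{\Phi}$ on $N_{G_1}\subseteq A\setminus N_{G_2}$ to equal $\rho^0_{N_{G_1}}$, and feeding this into the factorization for $G_1$ gives $\rho^\Phi_A=\rho^0_{A\setminus N_{G_1}}\otimes\rho^0_{N_{G_1}}=\rho^0_A$. Therefore $\bra{\Phi}\widetilde{\gamma_e}\ket{\Phi}=\Tr(\widetilde{\gamma_e}\,\rho^\Phi_A)=\Tr(\widetilde{\gamma_e}\,\rho^0_A)=\bra{0}^{\otimes n}\widetilde{\gamma_e}\ket{0}^{\otimes n}=+1$, contradicting $\bra{\Phi}\widetilde{\gamma_e}\ket{\Phi}=-1$.

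The hardest part will be the geometric and lightcone bookkeeping behind the two factorizations: verifying that $\widetilde{\gamma_e}\subseteq A$, that each dressed routing is genuinely the identity on $A\setminus N_{G_i}$, and that $N_{G_1}$ and $N_{G_2}$ can be made disjoint --- all of which require the loop $\gamma_e$ to have diameter $\gg t$, and is exactly where the constant-depth hypothesis enters. Establishing the deformation invariance $\gamma_m'\ket{\Psi}=\gamma_m\ket{\Psi}$ via stabilizer equivalence is routine but essential, as it is what makes $\ket{\Phi}$ routing-independent and lets the two gates constrain the same reduced state. An alternative, less mechanism-revealing route would be to observe that a depth-$t$ circuit applied to $\ket{0}^{\otimes n}$ has vanishing topological entanglement entropy for regions larger than $t$, whereas the toric code has $\gamma=\log 2$; but the braiding argument above is the one that should generalize to the quantitative statement of \cref{thm:toriccode}.
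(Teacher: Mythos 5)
Your proposal is correct and follows essentially the same route as the paper's proof: Bravyi's contradiction argument, in which the deformability of the $m$-string together with the product structure of $\ket{0}^{\otimes n}$ forces $\bra{\Phi}\widetilde{\gamma_e}\ket{\Phi}=+1$ on the state carrying two $m$ anyons, contradicting the braiding phase $-1$. The only (cosmetic) difference is that you organize the decoupling as an equality of reduced states on the support of the dressed loop, obtained from two routings crossing it at disjoint locations, whereas the paper derives the global factorization $\gamma_m\ket{\Psi_t}=\ket{\phi_A}\otimes\ket{0_{A^c}}$ from two routings passing through the disjoint regions $R^{\text{up}}$ and $R^{\text{down}}$ --- the underlying mechanism (two deformations whose supports jointly avoid the relevant region, plus the product structure of $\ket{0}^{\otimes n}$) is identical.
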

	
	\begin{figure}[h!]
		\centering 
		\includegraphics[width = 6cm, height = 5.7cm]{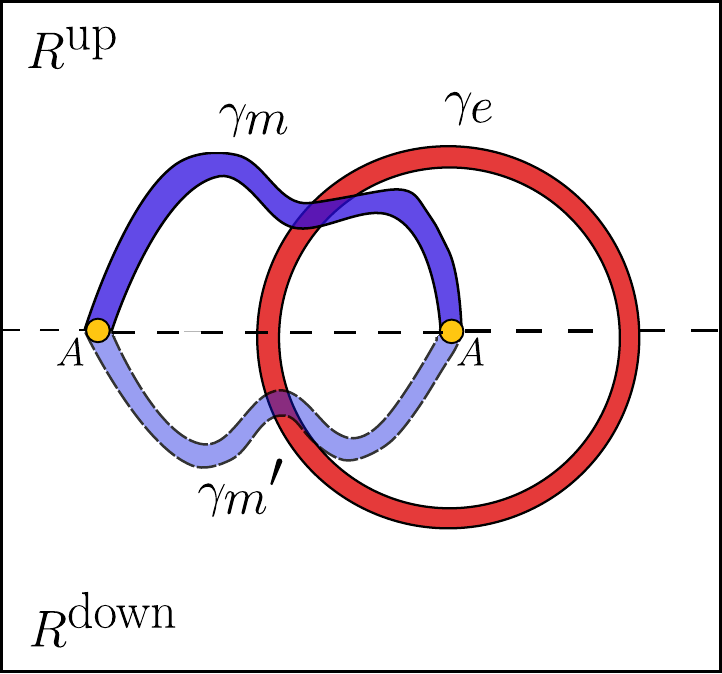}
		\caption{Two dressed anticommuting string operators for anyon braiding in a deformed toric code. Their width is at most $2(t+1)$. For the purpose of the proof, it suffices to consider a region $R$ with a linear size of $8(t+1)$. }
		\label{fig:dressed}
	\end{figure}
	
	\begin{proof}
		The proof is similar to Bravyi's proof described in Ref.~\cite{aharonov2018quantum}. Suppose that Eq.~(\ref{eq:step1}) is false. 
		In terms of a deformed state $|\Psi_{t}\rangle \equiv U_t |\Psi\rangle $, this is equivalent to $|\Psi_t\rangle = \ket{0}^{\otimes n}$. Since $U_t$ is a finite depth circuit, the braiding process of \refeq{eq:braiding} can be considered by a pair of dressed string operators $\widetilde{\gamma_m}=U_t\gamma_m U_t^{\dagger}$ and $\widetilde{\gamma_e}=U_t\gamma_e U_t^{\dagger}$. 
		Below, we omit the tilde and denote dressed operators by $\gamma_m,\gamma_e$ to simplify the notation.
		
		As in Fig.~\ref{fig:dressed}, we use a dressed string operator $\gamma_m$ to create two $m$ anyons supported on region $A$. We claim that
		\begin{equation}\label{eq:step1decouple}
			\gamma_m\ket{\Psi_t}=\ket{\phi_A}\otimes\ket{0_{A^c}} \ ,
		\end{equation}
		where $\ket{\phi_A}$ is a state in the Hilbert space $\mathcal{H}_A$ and $\ket{0_{A^c}}$ is the product state of $\ket{0}$ for all qubits outside $A$.
		To show this, we divide the system into three subregions: $A$, $R^{\text{up}}$, and $R^{\text{down}}$ as in Fig. \ref{fig:dressed}. 
		We can find another string operator $\gamma_m'$ that is supported on $A\cup R^{\text{down}}$ satisfying
		\begin{equation}\label{eq-deformability}
			\gamma_m'\ket{\Psi_t}=\gamma_m\ket{\Psi_t}.
		\end{equation}
		Define $\Pi^{\text{up}}$ as an operator that projects $\mathcal{H}_{R^{\text{up}}}$ to $\ket{0_{R^{\text{up}}}}$. We have, by assuming Eq.(\ref{eq:step1}) is false,
		\begin{equation}    
			\begin{aligned}
				\Pi^{\text{up}}\gamma_m'\ket{\Psi_t}
				=&\gamma_m'\Pi^{\text{up}}\ket{\Psi_t}
				=\gamma_m'\Pi^{\text{up}}\ket{0}^{\otimes n}\\
				=&\gamma_m'\ket{0}^{\otimes n}
				=\gamma_m'\ket{\Psi_t}.
			\end{aligned}
		\end{equation}
		Similarly, 
		\begin{equation}
			\Pi^{\text{down}}\gamma_m\ket{\Psi_t}=\gamma_m\ket{\Psi_t}.
		\end{equation}
		Combining the above three equations, we have
		\begin{equation}
			\Pi^{\text{up}}\Pi^{\text{down}}\gamma_m\ket{\Psi_t}=\gamma_m\ket{\Psi_t},
		\end{equation}
		and, hence, Eq.~(\ref{eq:step1decouple}) is proved.
		
		Now we apply a loop operator $\gamma_e$ on $\gamma_m\ket{\Psi}$. Recall that $\gamma_e$ is a deformed stabilizer, and, hence, $\gamma_e|\Psi_t\rangle = |\Psi_t\rangle$.
		Assuming Eq.(\ref{eq:step1}) is false, this implies $\gamma_e\ket{0_{A^c}}=\ket{0_{A^c}}$.
		Therefore, due to Eq.~(\ref{eq:step1decouple}), we have
		\begin{equation}
			\begin{aligned}
				\gamma_e\gamma_m\ket{\Psi_t}
				=&\ket{\phi_A}\otimes\gamma_e\ket{0_{A^c}}
				=\ket{\phi_A}\otimes\ket{0_{A^c}}\\
				=&\gamma_m\ket{\Psi_t}.
			\end{aligned}
		\end{equation}
		This, however, contradicts \refeq{eq:braiding}, namely $\gamma_m\gamma_e\gamma_m|\Psi_t\rangle = - |\Psi_t\rangle$, which completes the proof.
	\end{proof}
	
	It is worth mentioning that the proof based on the Lieb-Robinson bound in Ref.~\cite{bravyi2006lieb} crucially uses the ground state degeneracy, while the above proof (based on Ref.~\cite{aharonov2018quantum}) applies to the toric code with no ground state degeneracy as well.
	
	\subsection{Overlap with short-range-entangled state}
	
	The result of the above lemma is equivalent to $E_{t}(\Psi)>0$, and is, in general, not sufficient to obtain a lower bound for $E_{t}(\Psi)$ that would scale with the system size. 
	
	We now prove an upper bound on the maximum overlap between $\rho_R$ of a deformed toric code ground state and $|0_R\rangle$ for a patch $R$ of linear size $~ t$. 
	
	\begin{lemma}\label{lemma:step2toric}
		Let $R$ be any patch of linear size $8(t+1)$. 
		Given a deformed toric code ground state $|\Psi_t \rangle = U_t |\Psi\rangle$  with a depth $t$ circuit $U_t$, 
		there exists an absolute constant $\epsilon >0$ such that, in terms of trace norm,   
		\begin{equation}\label{eq:toricbound}
			\big\lVert\rho_R-\ketbra{0_R}\big\rVert_{1}>\epsilon,
		\end{equation}
		where $\rho_{R} = \Tr_{R^c}(|\Psi_t \rangle \langle \Psi_t|)$, and accordingly the fidelity\footnote{
			Here we use the convention $\mathcal{F}(\rho,\sigma) = \Big(\Tr \sqrt{ \sqrt{\rho} \sigma \sqrt{\rho}}\ \Big)^2$.
		} satisfies: 
		\begin{eqnarray}
			{\cal F}(\ket{0_R}, \rho_{R}) < 1 - \frac{1}{4}\epsilon^2=1-\epsilon'
		\end{eqnarray}
		for an absolute constant $\epsilon' > 0$.
	\end{lemma}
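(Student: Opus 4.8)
The plan is to show that the fidelity $p \defeq \bra{0_R}\rho_R\ket{0_R} = \mathcal{F}(\ket{0_R},\rho_R)$ is bounded away from $1$ by an absolute constant; the trace-norm statement then follows from the Fuchs--van de Graaf inequality $\tfrac12\norm{\rho_R-\ketbra{0_R}}_1 \ge 1-\sqrt{p}$, and the stated relation $\mathcal{F} < 1-\tfrac14\epsilon^2$ is just its converse $\tfrac12\norm{\rho_R-\ketbra{0_R}}_1 \le \sqrt{1-\mathcal{F}}$. Throughout I work with the dressed braiding operators placed inside $R$ as in \reffig{fig:dressed}: an open $m$-string $\gamma_m$ and a closed $e$-loop $\gamma_e$ of width at most $2(t+1)$, for which the exact relations $\gamma_e\ket{\Psi_t}=\ket{\Psi_t}$ and $\bra{\Psi_t}\gamma_m^\dagger\gamma_e\gamma_m\ket{\Psi_t}=-1$ continue to hold. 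The strategy is a quantitative version of \cref{lemma:TCLRE}: assuming $p$ is close to $1$, I will show that the braiding expectation is forced to be close to $+1$, contradicting $-1$ once $p$ exceeds a fixed threshold.

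First I would localize the closeness to $\ket{0}$. Writing $\Pi^{\text{up}},\Pi^{\text{down}}$ for the projectors onto $\ket{0}$ on $R^{\text{up}},R^{\text{down}}$ as in the proof of \cref{lemma:TCLRE}, the operator inequality $\ketbra{0_R}\le\Pi^{\text{up}}$ gives $\bra{\Psi_t}\Pi^{\text{up}}\ket{\Psi_t}\ge p$, hence $\norm{(I-\Pi^{\text{up}})\ket{\Psi_t}}\le\sqrt{1-p}$, and likewise for $\Pi^{\text{down}}$. Using the deformability relation \refeq{eq-deformability} -- the deformed strings $\gamma_m'$ on $A\cup R^{\text{down}}$ and $\gamma_m''$ on $A\cup R^{\text{up}}$ both satisfy $\gamma_m'\ket{\Psi_t}=\gamma_m''\ket{\Psi_t}=\gamma_m\ket{\Psi_t}$ and commute with $\Pi^{\text{up}},\Pi^{\text{down}}$ respectively -- the same commuting-projector manipulation as in \cref{lemma:TCLRE}, now carrying an error term from $(I-\Pi)\ket{\Psi_t}$, yields the approximate decoupling
\begin{equation}
\norm{\Pi^{\text{up}}\Pi^{\text{down}}\gamma_m\ket{\Psi_t}-\gamma_m\ket{\Psi_t}}\le 2\sqrt{1-p}.
\end{equation}
Thus $\ket{\xi}\defeq\Pi^{\text{up}}\Pi^{\text{down}}\gamma_m\ket{\Psi_t}$ is exactly $\ket{0}$ on $B\defeq R^{\text{up}}\cup R^{\text{down}}$, which contains $\supp\gamma_e$, and differs from $\gamma_m\ket{\Psi_t}$ by at most $2\sqrt{1-p}$ in norm.

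The key remaining step is to show that $\gamma_e$ acts almost trivially on $\ket{\xi}$, and this is the subtle point: the dressed $\gamma_e$ does not fix $\ket{0}$ on its support a priori. The resolution is to feed the exact stabilizer relation $\gamma_e\ket{\Psi_t}=\ket{\Psi_t}$ through the local closeness. Since $p$ is close to $1$, $\ket{\Psi_t}$ is within $\delta\defeq\sqrt{2-2\sqrt{p}}$ of a state $\ket{\Phi}=\ket{0_R}\otimes\ket{\cdot}$ that is $\ket{0}$ on all of $R\supseteq B$; because $\gamma_e$ is unitary, supported in $B$, and fixes $\ket{\Psi_t}$, this forces $\norm{\gamma_e\ket{0_B}-\ket{0_B}}\le\norm{\gamma_e\ket{\Phi}-\ket{\Phi}}\le 2\delta$, whence $\norm{\gamma_e\ket{\xi}-\ket{\xi}}\le 2\delta$. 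Chaining the two approximations (swapping $\gamma_m\ket{\Psi_t}\to\ket{\xi}$, applying the loop bound, and accounting for $\norm{\ket{\xi}}\ge 1-2\sqrt{1-p}$) gives
\begin{equation}
\big|\bra{\Psi_t}\gamma_m^\dagger\gamma_e\gamma_m\ket{\Psi_t}-1\big|\le 8\sqrt{1-p}+2\delta .
\end{equation}
Since the left-hand side equals $|{-1}-1|=2$ while $\delta\le\sqrt{2}\,\sqrt{1-p}$, I obtain $2\le(8+2\sqrt{2})\sqrt{1-p}$, so $1-p$ is at least an absolute constant; converting back yields \refeq{eq:toricbound} with $\epsilon>0$ absolute and $\epsilon'=\epsilon^2/4$. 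Note only $\supp\gamma_m,\supp\gamma_e\subseteq R$ with $R$ of linear size $8(t+1)$ enter, so the bound is uniform in $t$ (and $n$).

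I expect the loop step -- forcing the width-$O(t)$ dressed operator $\gamma_e$ to approximately fix $\ket{0_B}$ -- to be the main obstacle, precisely because $\gamma_e$ has no manifest action on product states; the trick of routing the exact relation $\gamma_e\ket{\Psi_t}=\ket{\Psi_t}$ through the approximate factorization $\ket{\Psi_t}\approx\ket{0_R}\otimes\ket{\cdot}$ is what keeps the constant independent of $t$. A secondary point to verify carefully is the geometry of \reffig{fig:dressed}: that $\gamma_e$ can be supported in $R^{\text{up}}\cup R^{\text{down}}$ while $\gamma_m$ is deformable off $R^{\text{up}}$ or off $R^{\text{down}}$ separately but never off both simultaneously, this last impossibility being exactly the linking that produces the braiding phase and hence the contradiction.
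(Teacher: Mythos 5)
Your proposal is correct and follows essentially the same route as the paper: a quantitative version of Lemma~\ref{lemma:TCLRE}, using the deformability of $\gamma_m$ together with the projectors $\Pi^{\text{up}},\Pi^{\text{down}}$ to approximately decouple the region $A$, and then routing the exact relation $\gamma_e\ket{\Psi_t}=\ket{\Psi_t}$ through the closeness of $\rho_R$ to $\ket{0_R}$ to force the braiding expectation toward $+1$, contradicting $-1$. The only difference is bookkeeping --- you track vector-norm errors on $\gamma_m\ket{\Psi_t}$ via the purification $\ket{0_R}\otimes\ket{\cdot}$ and Fuchs--van de Graaf, whereas the paper chains trace-distance approximations on $\gamma_m\rho_R\gamma_m^\dagger$ --- and your error constants check out.
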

	To rephrase, the ``local GEM" is not only positive, but must also bounded away from zero by an absolute gap that is independent of the subregion size.

	\begin{proof}
		Let $\gamma_m$ and $\gamma_e$ be dressed string operators. 
		Suppose that Eq.~(\ref{eq:toricbound}) is false. We claim that
		\begin{equation}\label{eq:puredecomposition}
			\gamma_m |0_R \rangle \langle 0_R| \gamma^\dagger_m
			\approx \ketbra{0_{A^c}} \otimes \rho'_A,
		\end{equation}
		where $\approx$ means that the trace distance is smaller than a constant $O(\epsilon)$. 
		Here, $A^c$ refers to the complement of $A$ inside the patch $R$; $\rho'_A$ is positive semidefinite but might be unnormalized, since we allow approximation.
		
		The proof is a more quantitative version of Lemma \ref{lemma:TCLRE}. We have
		\begin{eqnarray}\label{eq:4episilon}
			&&\Pi^{\text{up}} \big(\gamma_m |0_R \rangle \langle 0_R| \gamma^\dagger_m \big)\Pi^{\text{up}}  \cr
			&\approx& 
			\Pi^{\text{up}} \big(\gamma_m \rho_{R} \gamma^\dagger_m \big) \Pi^{\text{up}} \cr 
			&=&  \Pi^{\text{up}}
			\big(\gamma_{m'} \rho_{R} \gamma^\dagger_{m'} \big) \Pi^{\text{up}} 
			= \gamma_{m'} \big(\Pi^{\text{up}} \rho_{R}  \Pi^{\text{up}} \big) \gamma^\dagger_{m'} \cr 
			&\approx&  \gamma_{m'} \big(\Pi^{\text{up}} \ket{0_R}\bra{0_R}  \Pi^{\text{up}} \big) \gamma^\dagger_{m'}
			=\gamma_{m'} \big(\ket{0_R}\bra{0_R}  \big) \gamma^\dagger_{m'} \cr 
			&\approx& \gamma_{m'} \rho_{R} \gamma^\dagger_{m'} = \gamma_m \rho_{R} \gamma^\dagger_{m} \cr 
			&\approx& \gamma_m |0_R \rangle \langle 0_R| \gamma^\dagger_m,
		\end{eqnarray}
		where the negation of Eq.~(\ref{eq:toricbound}) is used for each $\approx$.
		Moreover,
		\begin{align}
			&\Pi^{\text{down}} (\gamma_m |0_R \rangle \langle 0_R| \gamma^\dagger_m ) \Pi^{\text{down}} \\
			=\;&\gamma_m \Pi^{\text{down}} |0_R \rangle \langle 
			0_R | \Pi^{\text{down}}\gamma^\dagger_m 
			= \gamma_m |0_R \rangle \langle 0_R|
			\gamma^\dagger_m. \notag
		\end{align}
		Therefore, \refeq{eq:puredecomposition} is proved:
		\begin{eqnarray}\label{eq-30}
			\gamma_m |0_R \rangle \langle 0_R| \gamma^\dagger_m  &\approx&\Pi^{\text{down}}\Pi^{\text{up}} \big(\gamma_m |0_R \rangle \langle 0_R| \gamma^\dagger_m \big) \Pi^{\text{up}} \Pi^{\text{down}} \cr
			&=& 
			\ketbra{0_{A^c}}\otimes \rho'_A.
		\end{eqnarray}
		
		Using this approximation, we obtain 
		\begin{eqnarray}\label{eq:31}
			&&\Tr(\gamma_e \gamma_m \rho_{R} \gamma^\dagger_m) \cr
			&\approx& \Tr(\gamma_e \gamma_m  |0_R \rangle \langle 0_R | \gamma^\dagger_m) \cr 
			&\approx& \Tr[ (\gamma_e \ketbra{0_{A^c}}) \otimes \rho'_A]  
			= \Tr(\gamma_e \ketbra{0_{A^c}}) \Tr(\rho'_A) \cr
			&\approx& \Tr(\gamma_e \ketbra{0_{A^c}})
			= \Tr(\gamma_e |0_R \rangle \langle 0_R|)\cr 
			&\approx& \Tr (\gamma_e \rho_{R}) = \bra{\Psi_{t}}\gamma_e\ket{\Psi_{t}}= 1
		\end{eqnarray}
		which contradicts \refeq{eq:braiding}.
		
		Counting the $\epsilon$'s in the above calculation, we see $\epsilon=\frac{1}{5}$ suffices\footnote{More precisely, \refeq{eq:4episilon} implies that the 1-norm distance between two sides is less than $4\epsilon$; the approximation in \refeq{eq-30} is therefore accurate up to $4\epsilon$; the approximations in \refeq{eq:31} are accurate up to $\epsilon$, $4\epsilon$, $4\epsilon$, $\epsilon$, respectively, leading to a total difference up to $10\epsilon$. However, the left-hand side of \refeq{eq:31} should equal -1 assuming \refeq{eq:braiding}.}; hence, $\epsilon'=0.01$ suffices.
		Importantly, $\epsilon$ and $\epsilon'$ are absolute constants and do not depend on the size of $R$. 
	\end{proof}
	
	We emphasize that the proof applies whenever anyons can be excited and braiding can be performed. 
	It does not depend, for example, on the microscopic details of the stabilizers, the specifics of the lattice, or the translational invariance of the system.

	Finally, to complete the proof of Theorem~\ref{thm:toriccode}, we divide the whole system into multiple regions $R_i$ such that mutual distances are larger than $2(t+1)$.
	For a (deformed) toric code, we have the following decoupling property: 
	\begin{equation}
		\rho_{R}=\underset{\scriptscriptstyle i}\otimes \rho_{R_i},\qquad R=\cup_i R_i,
		\label{eq:decoupling}
	\end{equation}
	since stabilizer generators are geometrically local and each region $R_i$ is correctable~\cite{bravyi2010tradeoffs}. From this and the monotonicity of fidelity, we obtain 
	\begin{equation}\label{eq:33} 
		\begin{aligned}
			{\cal F}(\ket{\Psi_t},\ket{0^n}) 
			&\leq {\cal F} (\rho_R, \ket{0_R}) 
			= \prod_i {\cal F} (\rho_{R_i}, \ket{0_{R_i}}) \\
			&< (1-\epsilon')^{\frac{n}{100(t+1)^2}},
		\end{aligned}
	\end{equation}
	where $n/100(t+1)^2$ is a number of disentangled patches. 
	We emphasize that \refeq{eq:decoupling} is essential for \refeq{eq:33}, otherwise the first inequality does not hold, with the GHZ state being a counterexample.
	Equation \ref(eq:33) proves Theorem~\ref{thm:toriccode}, with the constant $\alpha=-\frac{1}{100}\log_2(1-\epsilon')\approx 1.4\times 10^{-4}$.

	While we focused on the toric code for concreteness, our proof is applicable to any system featuring emergent anyons with nontrivial braiding statistics.
	The key ingredients for our argument are the braiding relation \refeq{eq:braiding} and the deformability of string operators \refeq{eq-deformability}.
	In more general settings, \refeq{eq:braiding} takes the value $\mathcal{D}S_{ab}/d_ad_b$, where $d_a$ and $d_b$ are the quantum dimension of the anyons, $\mathcal{D}$ is the total dimension, and $S_{ab}$ is the corresponding element of the modular $S$ matrix \cite{kitaev2006anyons}. 
	Our proof applies whenever $\mathcal{D}S_{ab}/d_ad_b\neq 1$, namely, whenever the braiding is nontrivial.

	\section{2D local stabilizer codes}\label{sec:storage}
	In the previous section, we derived the GEM lower bound for the toric code based on emergent anyons.
	The toric code model stands out as a quintessential example of topological quantum error-correcting codes.
	In this section, we show that a similar bound applies more broadly to all 2D geometrically local stabilizer codes with or without boundaries (the geometric locality here is defined using the standard Euclidean geometry).
	
	\begin{theorem}\label{thm_stabilizer}
		For 2D geometrically local stabilizer codes, denote $w$ as the maximal diameter of stabilizer generators. If $d>\Theta((w+t)^2)$, we have:
		\begin{equation}
			E_{t}(\Psi)>\alpha\frac{d^2}{w^2(w+t)^2}.
		\end{equation}
		for an absolute constant $\alpha>0$.
	\end{theorem}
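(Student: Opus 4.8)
The plan is to reduce Theorem~\ref{thm_stabilizer} to the toric-code analysis of \refsec{sec:toric}, which, as the remark following Lemma~\ref{lemma:step2toric} stresses, is entirely model-independent. The only structural inputs used there are a braiding relation of the form \refeq{eq:braiding}, the deformability \refeq{eq-deformability} of the open string, the fact that the loop operator stabilizes the ground state, and the decoupling \refeq{eq:decoupling} of well-separated correctable patches. So the whole theorem follows once I exhibit, inside a patch of linear size $O(w+t)$, a pair of geometrically local operators $\gamma_a$ (open, deformable, creating a localized syndrome, playing the role of $\gamma_m$) and $\gamma_b$ (a product of stabilizer generators, hence $\gamma_b\ket{\Psi}=\ket{\Psi}$, playing the role of $\gamma_e$) with $\bra{\Psi}\gamma_a^\dagger\gamma_b\gamma_a\ket{\Psi}=-1$, and then place $\Omega\!\big(d^2/(w^2(w+t)^2)\big)$ disjoint copies of it. As in \refsec{sec:toric} the operators are first built for $\ket{\Psi}$ and then dressed, $\widetilde\gamma=U_t\gamma U_t^\dagger$, broadening widths from $O(w)$ to $O(w+t)$.

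To construct the configuration I would first produce a pair of anticommuting logical operators confined to thin line-like regions (the code encodes at least one logical qubit, which is what ties the bound to the degeneracy rather than to emergent anyons on the sphere). Since any region of diameter below $d$ is correctable, the cleaning lemma~\cite{bravyi2009no} guarantees that no nontrivial logical representative fits inside such a region, so each must stretch across a distance $\gtrsim d$; conversely, repeatedly cleaning a fixed logical operator off the complement of a strip confines it, up to stabilizers, to a strip of width $O(w)$. Combined with the disentangling lemma~\cite{bravyi2010tradeoffs}, this yields anticommuting logicals $\bar L_1,\bar L_2$ supported on a thin ``horizontal'' and a thin ``vertical'' strip. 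The braiding configuration then comes from the truncation procedure: cutting $\bar L_1$ to a finite segment gives $\gamma_a$, which commutes with every stabilizer except those within distance $O(w)$ of the two cut points, i.e.\ it creates a localized anyon at each endpoint. Deformability \refeq{eq-deformability} holds because two segments with the same endpoints differ by a closed loop confined to a correctable region, hence a product of stabilizers. For the detector I take $\gamma_b=\prod_{\supp(S_j)\subseteq A}S_j$ over a disk $A$ of size $O(w+t)$ enclosing one endpoint: a stabilizer supported in $A$, so it is local and satisfies $\gamma_b\ket{\Psi}=\ket{\Psi}$.

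The heart of the argument, and the step I expect to be the main obstacle, is verifying that this local pair actually braids to $-1$, i.e.\ that $\gamma_a$ anticommutes with $\gamma_b$. Since $\gamma_a$ anticommutes exactly with the stabilizers its truncation violates, and all of these lie inside $A$, the sign equals the parity of the enclosed syndrome, and one must show this parity is \emph{odd} and odd \emph{uniformly} across many disjoint patches. Nontriviality is not automatic for a single local truncation, since a small segment is merely a correctable error; it must therefore be inherited from the \emph{global} relation $\{\bar L_1,\bar L_2\}=0$. The delicate point is to transport this single global anticommutation into a local, patchwise-odd linking of $\gamma_a$ with $\gamma_b$ using the thinness of the strips and the freedom to deform $\bar L_1,\bar L_2$, while simultaneously controlling all operator widths by $O(w+t)$ and producing a genuinely two-dimensional family of $\Omega\!\big(d^2/(w^2(w+t)^2)\big)$ disjoint configurations rather than a one-dimensional array along a single string. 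This geometric bookkeeping is where the polynomial factors in $w$ and $t$ originate.

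Finally I would assemble the pieces exactly as in \refeq{eq:33}. The hypothesis $d>\Theta((w+t)^2)$ is used to guarantee that each patch $R_i$ (and the unions appearing in the argument), having area below $d$, is correctable, which is precisely what validates the decoupling $\rho_R=\bigotimes_i\rho_{R_i}$ of \refeq{eq:decoupling} for geometrically local stabilizer generators. Applying the model-independent bound of Lemma~\ref{lemma:step2toric} to each of the $\Omega\!\big(d^2/(w^2(w+t)^2)\big)$ disjoint patches and multiplying fidelities by monotonicity then gives $\mathcal F(\ket{\Psi_t},\ket{0^n})<(1-\epsilon')^{\Omega(d^2/(w^2(w+t)^2))}$, i.e.\ $E_t(\Psi)=\Omega\!\big(d^2/(w^2(w+t)^2)\big)$, which is the claimed bound with $\alpha$ absorbing $-\tfrac{1}{\,\cdot\,}\log_2(1-\epsilon')$ and the tiling constants.
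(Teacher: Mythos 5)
Your skeleton is the same as the paper's (local braiding configuration $\Rightarrow$ local fidelity gap as in Lemma~\ref{lemma:step2toric} $\Rightarrow$ multiply over $\Omega(d^2/w^2(w+t)^2)$ disjoint patches), but the proposal has a genuine gap at precisely the step you flag as ``the main obstacle'': you never establish that the truncated pair actually anticommutes, i.e.\ that the local braiding phase is $-1$. This is not bookkeeping --- it is the core construction of the proof, and your specific candidate does not deliver it. Taking $\gamma_b=\prod_{\supp(S_j)\subseteq A}S_j$ over a disk enclosing one endpoint gives a commutation phase equal to the parity of the number of generators violated by the cut, and there is no reason for that parity to be odd for a general 2D local stabilizer code. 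The paper's resolution is different: it builds the detector not from generic generators but from the \emph{stabilizer element} $\ell_1\ell_1'$ (a logical times an equivalent representative on a deformed mesh), truncates it near the anticommuting square $Q$, and then uses the parity argument that a stabilizer must anticommute with the logical $\ell_2$ at an \emph{even} number of intersection squares --- so anticommutation at $Q$ forces anticommutation at $Q_{\text{up}}$ or $Q_{\text{down}}$, and a further truncation isolates exactly one anticommuting crossing (Fig.~\ref{fig_truncation}). That is the mechanism that transports the single global relation $\{\ell_1,\ell_2\}=0$ into a local, provably odd linking; without it the argument does not close.

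Two further points. First, your claim that cleaning confines a logical operator to a single thin strip of width $O(w)$ is unjustified: the cleaning lemma requires the region being cleaned (the strip's complement) to be correctable, and the complement of a strip in 2D is not. This is exactly why the paper works with a \emph{mesh} $B=A^c$ (Lemma~\ref{lemma:cleanmesh}), whose complement is a union of well-separated correctable squares of size $O(d/w)$; the intersection of two shifted meshes is then automatically a union of $O(w)$-squares, localizing the anticommutation. Second, the quadratic count $\Omega(d^2/w^2(w+t)^2)$ is obtained in the paper by taking $\Theta(d/(w(w+t)))$ diagonally shifted meshes and extracting one anticommuting, mutually separated crossing from each \emph{pair}; with a single pair of strips you would get one configuration, and you do not explain how to produce a genuinely two-dimensional family. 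Both of these are supplied by the mesh construction you omit.
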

	
	Although the starting point here is quantum error correction instead of anyons, our proof strategy is similar to the one for the toric code using nontrivial braiding processes. 
	The key extra ingredient is a generic prescription for constructing string operators.
	This enables us to derive the existence of anyons with nontrivial braiding statistics based solely on the quantum error correction property.

	\subsection{Cleaning lemma}
	
	A technical challenge is to construct a pair of open and closed string operators that anticommute with each other without relying on the macroscopic details of stabilizer generators.
	The crucial tool is a particular strengthening of the cleaning lemma for 2D local stabilizer codes~\cite{bravyi2009no}. 
	
	Consider square regions $A_{j}$ of linear size $O(\frac{d}{w})$ which are separated from each other by $O(w)$. 
	Denote the union by $A\equiv\cup_{j}A_{j}$, and its complement by $B\equiv A^c$. 
	We call a subsystem $B$ a \emph{mesh} (Fig.\ref{fig_mesh}).  
	
	\begin{figure}[h!]
		\centering
		\includegraphics[width=0.3\textwidth]{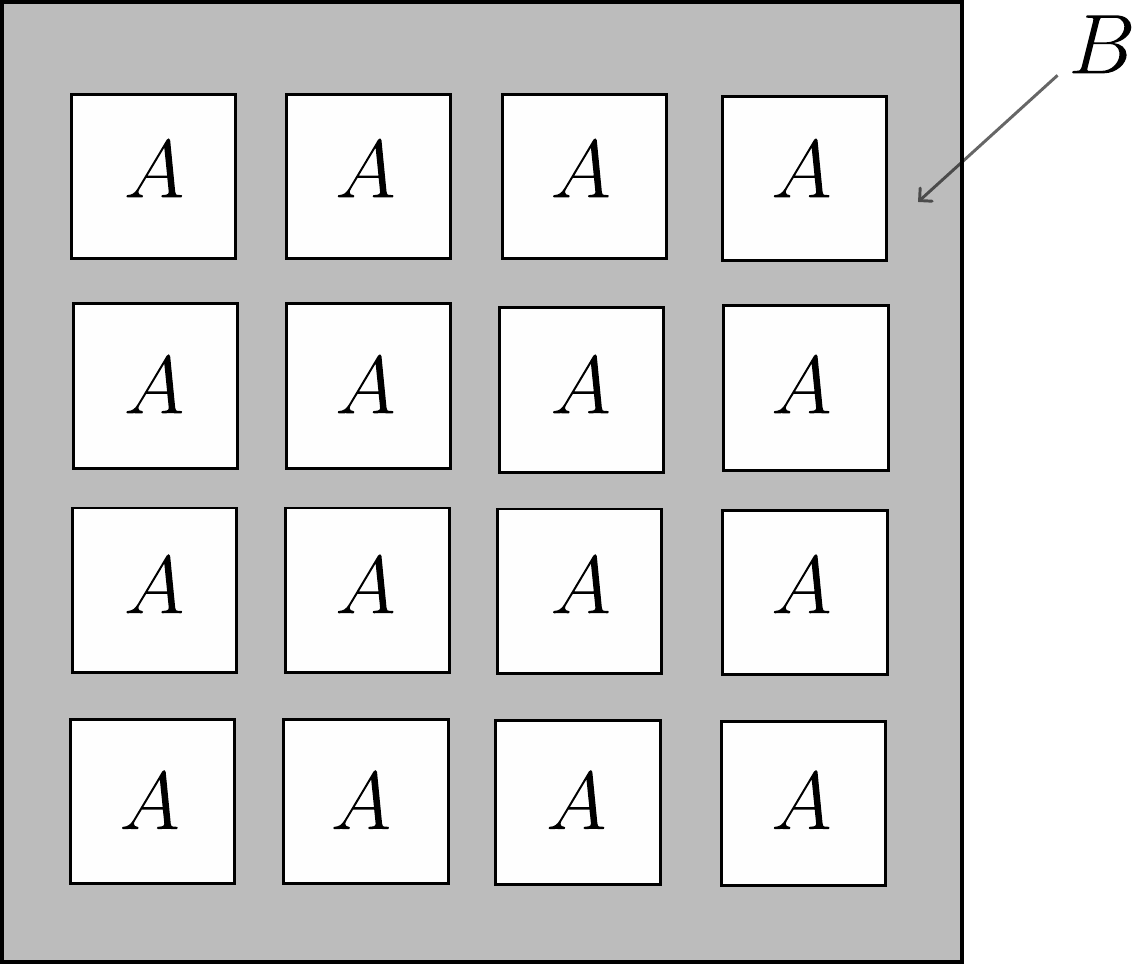}
		\caption{Each square $A_j$ is correctable and of linear size $O(\frac{d}{w})$. Squares are separated by $O(w)$, so their union $A$ is also correctable. Hence, all the logical operators can be found in the shaded region, a mesh $B=A^c$. }
		\label{fig_mesh}
	\end{figure}
	
	\begin{lemma}\label{lemma:cleanmesh}
		In a 2D local stabilizer code, all logical operators can be supported on a two-dimensional mesh $B=A^c$. 
	\end{lemma}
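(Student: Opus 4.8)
The plan is to deduce the statement from the standard cleaning lemma \cite{bravyi2009no}, which asserts that a region $M$ is \emph{correctable} (i.e.\ supports no nontrivial logical operator) if and only if every logical operator admits a representative supported on the complement $M^c$. It therefore suffices to prove that the union $A=\cup_j A_j$ is correctable; the lemma then follows at once by cleaning every logical off $A$ onto the mesh $B=A^c$. I would split the correctability of $A$ into two independent ingredients: (i) each individual square $A_j$ is correctable, and (ii) the separation by more than $w$ upgrades correctability of the pieces to correctability of their union.

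For ingredient (ii) I would argue directly. Suppose $L$ is a logical operator with $\supp(L)\subseteq A$. Since the squares are pairwise disjoint, $L$ factorizes as a tensor product $L=\bigotimes_j L_j$ with $L_j$ supported on $A_j$. Because consecutive squares are separated by more than the generator diameter $w$, no stabilizer generator can intersect two distinct squares: any generator $g$ touches at most one $A_{j_0}$, so its symplectic overlap with $L$ coincides with its overlap with $L_{j_0}$. From $[g,L]=0$ one then reads off $[g,L_{j_0}]=0$, while generators disjoint from a given $A_j$ commute with $L_j$ trivially; hence each $L_j$ lies in the normalizer $S^\perp$ of the stabilizer group $S$. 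If every $A_j$ is correctable, each such $L_j$ must in fact be a stabilizer, and therefore $L=\prod_j L_j\in S$ is trivial. This step is clean and uses only the locality of the generators together with the separation.

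The real work is ingredient (i): a square of linear size $O(d/w)$ contains $\Theta((d/w)^2)\gg d$ qubits once $d\gg w^2$, so its correctability is \emph{not} a consequence of the naive distance bound and genuinely requires the two-dimensional geometry. Here I would use a sweeping (deformation) argument in the spirit of the 2D cleaning technique: processing the columns of $A_j$ from one side to the other, I repeatedly multiply $L$ by geometrically local stabilizers to push its support across the square, each step displacing support by at most $w$. After sweeping the full width, the resulting operator $L'=sL$ (with $s\in S$) is confined to a thin strip of width $O(w)$ along the far edge, lying inside the surrounding moat. Choosing the side length to be a small enough multiple of $d/w$ guarantees that this strip carries fewer than $d$ qubits; since $L'\in S^\perp$ is supported on fewer than $d$ qubits, it must be a stabilizer, whence $L\in S$ and $A_j$ is correctable.

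The main obstacle I anticipate is making the sweep \emph{one-sided} and rigorous. The plain cleaning lemma only guarantees a (possibly global) stabilizer that removes support from a given column, whereas the argument needs the cleaning of each column to be achievable using stabilizers supported to one side, so that support is genuinely pushed toward the moat and never leaks back across already-cleaned columns. Establishing this directional, local cleanability---essentially the statement that a 2D local code has no obstruction to deforming a logical representative across a disk-like region below the distance scale---is the technical heart of the ``strengthened'' cleaning lemma, and is precisely where the bounded generator diameter $w$ must be used in full.
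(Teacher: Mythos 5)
Your overall route is the same as the paper's: reduce the statement, via the cleaning lemma of Ref.~\cite{bravyi2009no}, to the correctability of $A=\cup_j A_j$, and obtain the latter from (i) correctability of each square $A_j$ and (ii) a union lemma for correctable regions separated by more than the generator diameter. Your proof of (ii) is correct and is essentially the standard argument (it is Lemma 2 of Ref.~\cite{bravyi2010tradeoffs}, which the paper cites). You also correctly identify that (i) is the genuinely nontrivial ingredient, since a square of linear size $\Theta(d/w)$ contains far more than $d$ qubits, so mere weight counting does not suffice.

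The one real gap is the step you yourself flag: the column-by-column sweep for (i) is not licensed by the plain cleaning lemma, which only guarantees \emph{some} stabilizer $s$ with $\supp(Ls)$ in the complement of the cleaned region, with no control on where that support lands; nothing prevents it from leaking back over already-cleaned columns or escaping the square entirely, so the claim that the residual operator is confined to a thin strip of width $O(w)$ does not follow as stated. The standard repair --- and the content of Corollary 1 of Ref.~\cite{bravyi2010tradeoffs}, which the paper simply invokes --- is not a one-sided sweep but an induction on nested regions via the Expansion Lemma: one grows the square by adjoining, at each step, a boundary layer $B$ of width $\sim w$ that (a) is itself correctable by qubit counting ($O(w\cdot d/w)=O(d)$ qubits, below $d$ with suitable constants) and (b) shields the previously grown region from the exterior, so that no generator couples the interior to the outside; the Expansion Lemma (whose proof goes through the disentangling lemma for the correctable layer $B$, not through a choice of local cleaning stabilizers) then upgrades correctability of the inner region to the enlarged one. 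This is precisely where the bounded diameter $w$ enters, as you anticipated, but the mechanism is an inductive correctability statement rather than a directional cleaning procedure. With (i) supplied by that citation or by the expansion-lemma induction, your argument matches the paper's proof.
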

	
	\begin{proof}
		The lemma was essentially proven in Ref.~\cite{bravyi2010tradeoffs}, and we sketch the key arguments. 
		First, each square is correctable since the linear size is $O(\frac{d}{w})$ (Corollary 1 in Ref.~\cite{bravyi2010tradeoffs}).
		Second, since their mutual separation is larger than the size of stabilizers, the union $A=\cup_j A_j$ is also correctable (Lemma 2 in Ref.~\cite{bravyi2010tradeoffs}).
		Therefore, due to the cleaning lemma (Lemma 1 in Ref.~\cite{bravyi2009no}), for any logical operator $\ell$, there exists a stabilizer $S$ such that $\ell S$ is supported on the mesh $B=A^c$.
	\end{proof}
	
	We use Lemma~\ref{lemma:cleanmesh} to construct a braiding process. 
	Let $\ell_1,\ell_2$ be a pair of anticommuting logical operators.
	Picking a mesh $B_1$ as in Lemma \ref{lemma:cleanmesh}, $\ell_1$ can be supported on $B_1$.
	We now shift $B_1$ by $O(w)$ in both horizontal and vertical directions and denote the new mesh by $B_2$ (Fig.~\ref{fig_shift}).
	Then, $\ell_2$ can be supported on $B_2$.
	Observe that the intersection of $\ell_1$ and $\ell_2$ is a union of squares of linear size $O(w)$. 
	Since $\ell_1$ and $\ell_2$ anticommute with each other, they must anticommute on at least one square. Denote such a square as $Q$. 
	
	\begin{figure}[h!]
		\centering
		\includegraphics[width=0.45\textwidth]{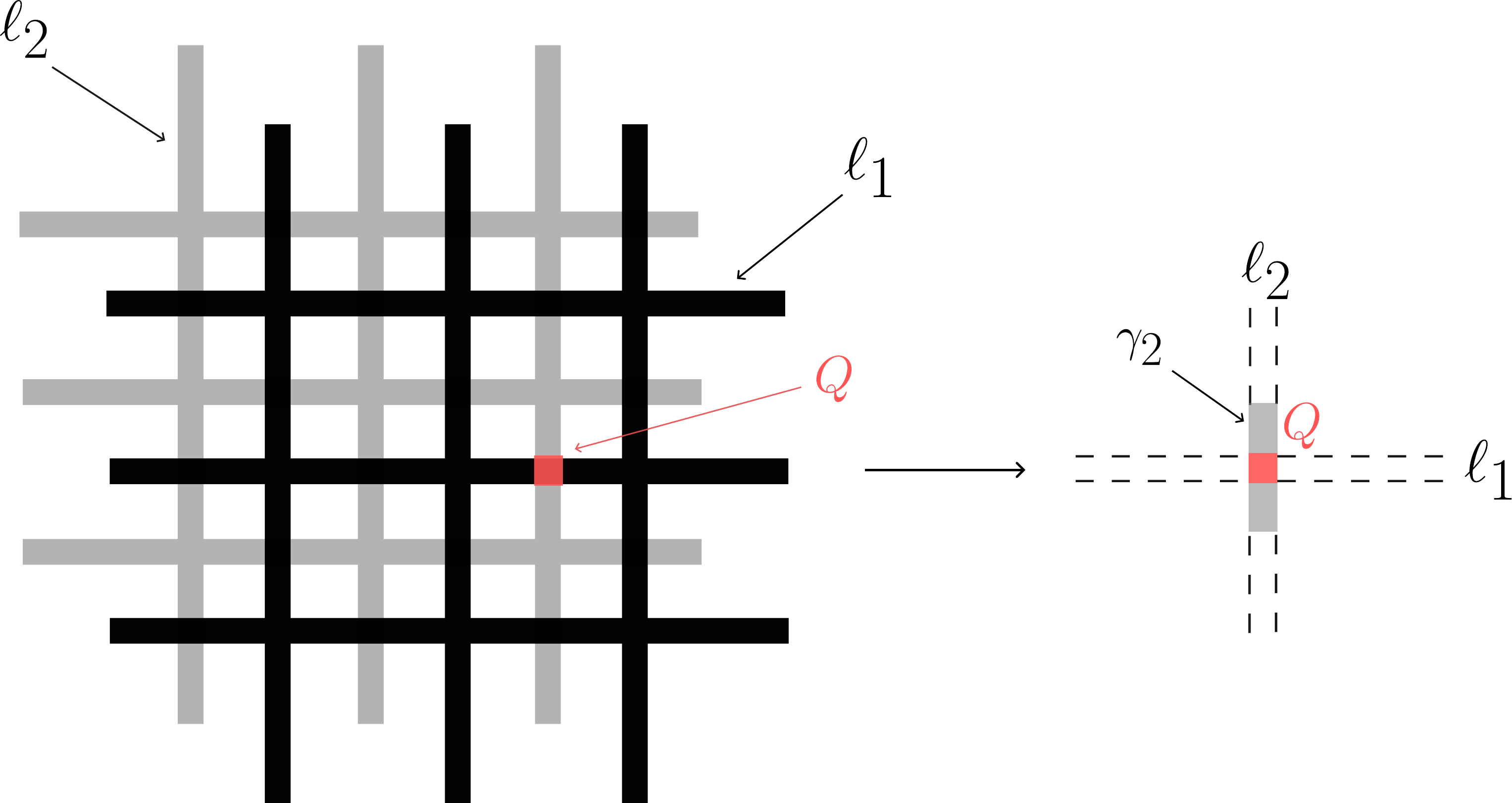}
		\caption{
			(Left) A pair of logical operators $\ell_1$ and $\ell_2$ which anticommute with each other at one square $Q$. 
			(Right) A segment of $\ell_2$ around $Q$ is called $\gamma_2$. 
		}
		\label{fig_shift}
	\end{figure}

	\subsection{Braiding process}
	
	The goal is to find a braiding process of the form $\gamma_2 \gamma_1 \gamma_2|\Psi\rangle = - |\Psi\rangle$ where $\gamma_1$ is a looplike stabilizer and $\gamma_2$ is a deformable open string operator which anticommutes with $\gamma_1$.  
	To construct $\gamma_1$, we repeat the above argument for a slightly deformed mesh $B_1'$ and construct $\ell_1'$, a logical Pauli operator that is equivalent to $\ell_1$ and is supported on $B_1'$. 
	This is possible as long as the complement of $B_{1}'$ consists of squares whose perimeter remains $O(\frac{d}{w})$. 
	Then, $\ell_1\ell_1'$ is a stabilizer operator supported on $B_1\cup B_1'$ (Fig.~\ref{fig_truncation}). 
	
	We construct a looplike stabilizer $\gamma_1$ from $\ell_1\ell_1'$ by ``truncating'' its support away from the intersection $Q$. 
	For a given region $R$, let $R^+$ denote the set of qubits $j$ such that $\disc(j,R)\leq w$.
	The following lemma allows us to truncate a large stabilizer $S$ into a smaller stabilizer $S'$ supported on $R^+$ with possible changes only at the boundary while keeping the operator content on $R$.

	\begin{figure}[h!]
		\centering
		\includegraphics[width=0.5\textwidth]{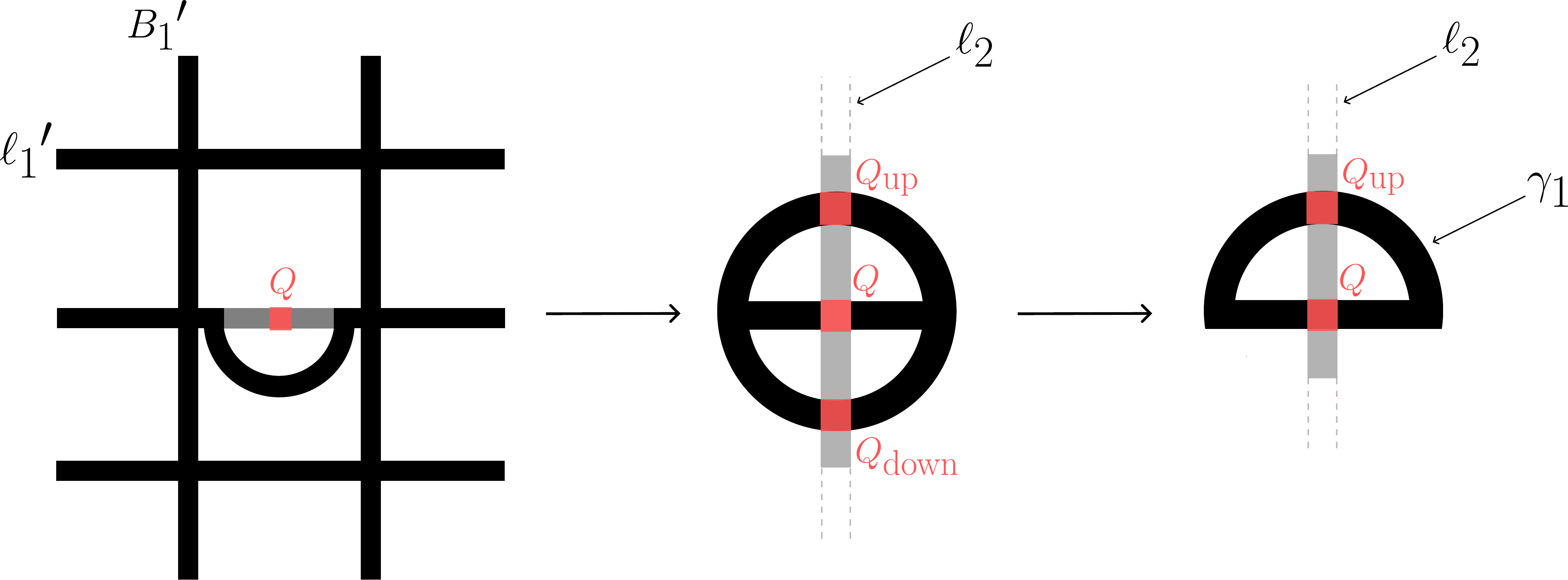}
		\caption{
			(Left) An equivalent logical operator $\ell_1'$ supported on a deformed mesh $B_{1}'$. 
			(Middle) Starting from a stabilizer operator $S=\ell_1\ell_1'$ supported on $B_1\cup B_{1}'$, we truncate $S$ to obtain a stabilizer operator $\gamma_1$ supported around $Q$. This stabilizer intersects with $\ell_2$ at three squares. 
			(Right) We further truncate the stabilizer to obtain $\gamma_1$ which intersects with $\ell_2$ at two squares. 
		}
		\label{fig_truncation}
	\end{figure}
	
	\begin{lemma}
		For any stabilizer operator $S$ and any region $R$, there exists a stabilizer operator $S'$, such that (i) $\supp(S')\subseteq R^+$ and (ii) $\supp(SS')\cap R=\emptyset$ (equivalently, $S|_{R} = S' |_{R}$).
	\end{lemma}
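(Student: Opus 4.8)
The plan is to exploit the geometric locality of the stabilizer generators directly. Since $S$ lies in the stabilizer group, it can be written as a product of local generators, $S=\prod_{j\in J}S_j$, for some finite index set $J$ (up to an overall phase, which will not affect the support statements). The idea is then simply to \emph{discard} from this product exactly those generators whose support misses $R$. Concretely, first I would partition the index set according to whether each generator touches $R$: set $J'=\{\,j\in J:\supp(S_j)\cap R\neq\emptyset\,\}$ and define $S'=\prod_{j\in J'}S_j$. Because $S'$ is itself a product of generators, it is a legitimate stabilizer operator, so it remains only to verify the two support conditions.

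For property (i) I would use the diameter bound. If $j\in J'$, then $\supp(S_j)$ contains some qubit $q\in R$, and since $S_j$ has diameter at most $w$, every other qubit $q'\in\supp(S_j)$ satisfies $\disc(q',q)\leq w$. As $q\in R$, this gives $\disc(q',R)\leq w$, i.e.\ $q'\in R^+$. Hence $\supp(S_j)\subseteq R^+$ for each $j\in J'$, and therefore $\supp(S')\subseteq R^+$. For property (ii) I would compute $SS'$: since the stabilizer group is abelian and each Hermitian generator squares to the identity, the factors indexed by $J'$ cancel, leaving (up to phase) $SS'=\prod_{j\in J\setminus J'}S_j$. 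By the very definition of $J'$, every generator appearing here has support disjoint from $R$, so $\supp(SS')\cap R=\emptyset$, which is exactly the claim $S|_R=S'|_R$.

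I do not anticipate a genuine obstacle here: the argument is essentially a bookkeeping consequence of locality together with the fact that $S$ decomposes into bounded-diameter generators. The only points requiring mild care are that the decomposition $S=\prod_j S_j$ need not be unique (any single valid decomposition suffices, since we only assert existence of $S'$) and that overall $\pm1$ phases may be generated when reordering the product; because both support conditions, and the equivalent statement $S|_R=S'|_R$, are insensitive to global phases, these phases are harmless.
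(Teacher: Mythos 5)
Your proof is correct and follows essentially the same route as the paper's: decompose $S$ into local generators, keep exactly those generators whose support meets $R$ to form $S'$, and use the diameter bound $w$ for (i) and the cancellation $SS'=\prod_{\supp(S_a)\cap R=\emptyset}S_a$ for (ii). The extra remarks about phases and non-uniqueness of the decomposition are fine but not needed.
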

	\begin{proof}
		Since $S$ is a stabilizer operator, it can be written as a product of local stabilizer generators $S=\prod_{a}S_{a}$. 
		Specifically, we can decompose $S$ as follows:
		\begin{align}
			S = \prod_{\supp(S_a)\cap R=\emptyset} S_{a} \prod_{\supp(S_b)\cap R\not=\emptyset} S_{b}.
		\end{align}
		Define $S'=\prod_{\supp(S_b)\cap R\not=\emptyset} S_{b}$. 
		Since the diameter of each $S_a$ is less than $w$, we know $\supp(S')\subseteq R^+$.
		Moreover, $\supp(SS')\cap R=\emptyset$ since $SS'=\prod_{\supp(S_a)\cap R=\emptyset} S_{a}$.
	\end{proof}
	
	The construction of $\gamma_1$ proceeds in two steps. 
	First, let us apply the above lemma to truncate $\ell_1\ell_1'$ into a circular region containing the square $Q$. 
	The resulting stabilizer operator is supported on a region consisting of a circle and a horizontal line; see the middle panel of Fig.~\ref{fig_truncation} for illustration. 
	Second, we observe that this truncated stabilizer intersects with $\ell_2$ at $Q_{\text{up}}$, $Q$, and $Q_{\text{down}}$.
	Since this operator anticommutes with $\ell_2$ at $Q$, it must anticommutes with $\ell_2$ at either $Q_{\text{up}}$ or $Q_{\text{down}}$. 
	Without loss of generality, we assume that it is $Q_{\text{up}}$. 
	We can then further shrink the stabilizer operator to a semicircular region, as shown in the right panel of Fig.~\ref{fig_truncation}, by removing the stabilizer generators below the horizontal line.
	
	This is the desired looplike stabilizer which we shall use as $\gamma_1$. 
	Note that $\gamma_1$ intersects and anticommutes with $\ell_2$ at both $Q$ and $Q_{\text{up}}$. 
	
	The construction of $\gamma_2$ proceeds in an analogous manner. 
	We start by finding $\ell_2'$ in a deformed mesh $B_2'$, and then truncate $\ell_2 \ell_2'$ so that it will be supported only near $Q_{\text{up}}$. 
	We can then construct a looplike stabilizer operator $S_{2}$ which intersects and anticommutes with $\gamma_1$ at $Q_{\text{up}}$ and $Q_{\text{up}}'$; see Fig.~\ref{fig:braiding} for an illustration.
	Finally, we decompose $S_{2}$ into two segments, $S_{2}=\gamma_2 \gamma_2'$, such that $\gamma_2$ and $\gamma_2'$ intersect with $\gamma_1$ at $Q_{\text{up}}$ and $Q_{\text{up}}'$ respectively. 
	We then obtain the desired braiding processes 
	\begin{align}
		\gamma_2 \gamma_1 \gamma_2 |\Psi\rangle = - |\Psi\rangle, \qquad
		\gamma_2' \gamma_1 \gamma_2' |\Psi\rangle = - |\Psi\rangle. 
	\end{align}
	
	\begin{figure}[h!]
		\centering
		\includegraphics[width=0.35\textwidth]{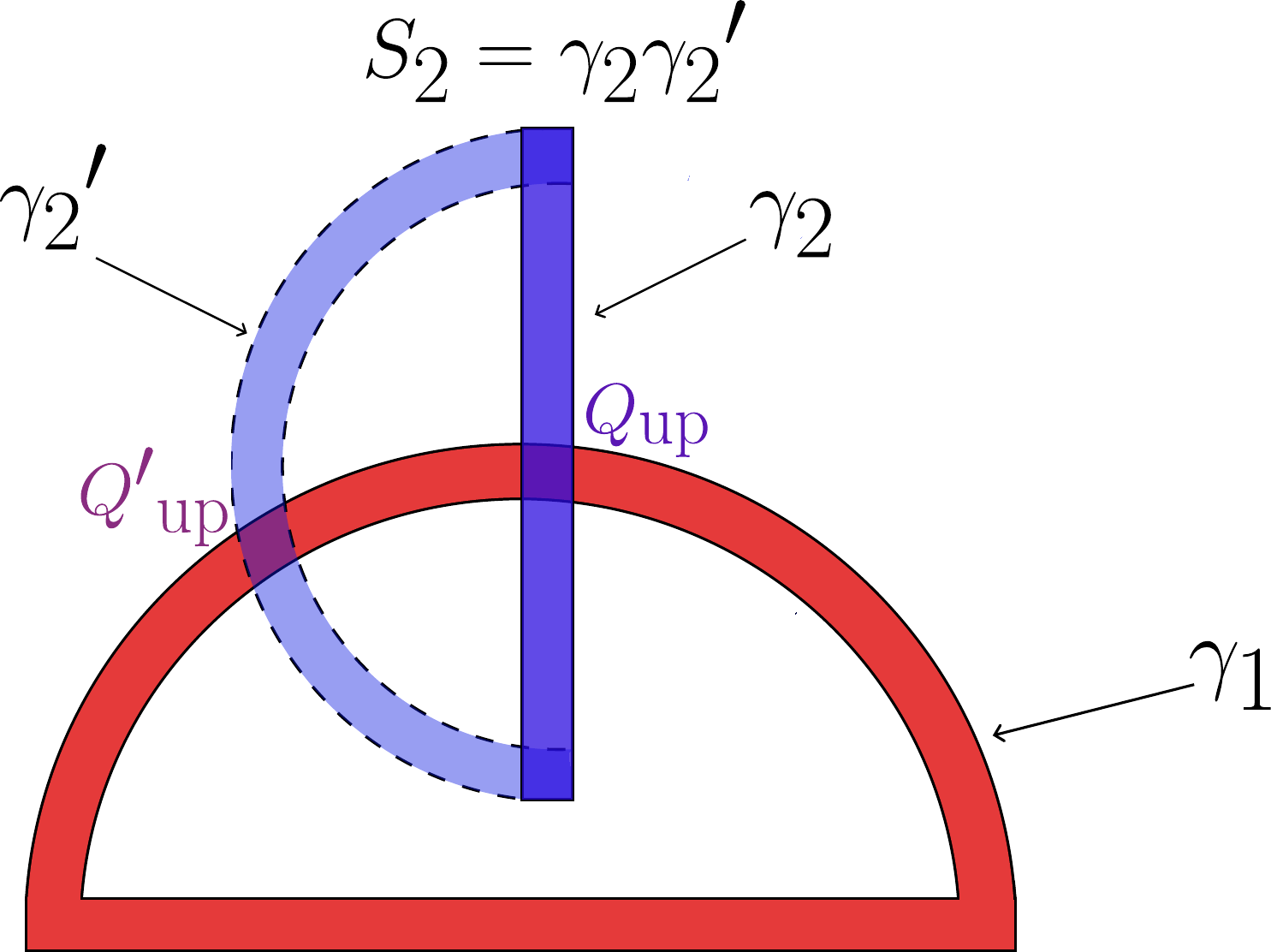}
		\caption{Braiding process using truncated stabilizer operators.}
		\label{fig:braiding}
	\end{figure}
	
	\subsection{Overlap with short-range-entangled states}
	
	The above arguments assume $t=0$. Applying a finite-depth circuit, the correctable squares now have width $\Theta(\frac{d}{w})-\Theta(t)=\Theta(\frac{d}{w})$, and the separation is now $\Theta(w+t)$. 
	With the braiding processes, the same argument as in Lemma \ref{lemma:TCLRE} implies $\ket{\Psi}\not= U_{t}\ket{0}^{\otimes n}$ if $\frac{d}{w} > \Theta(w+t)$. 
	We further lower bound $E_{t}(\Psi)$ using a similar strategy.
	Starting from a mesh $B$ of width $\Theta(w+t)$, consider a family of meshes, each obtained by shifting the previous meshes along the diagonal by $\Theta(w+t)$. 
	For each pair of meshes, we can repeat the above argument and find an anticommuting intersection and a local upper bound as in Lemma \ref{lemma:step2toric}.
	Moreover, each anticommuting intersection must be separated from each other by $\Theta(w+t)$ and is, hence, decoupled from each other.
	The number of such meshes is at least $\Theta(\frac{d}{w(w+t)})$; hence the number of anticommuting intersections is at least $\Theta(\frac{d^2}{w^2(w+t)^2})$.
	This completes the proof of Theorem~\ref{thm_stabilizer}.

	\section{Honeycomb model}\label{sec:fermion}
	
	The original honeycomb model, introduced by Kitaev, is given by~\cite{kitaev2006anyons}:
	\begin{align}
		H_{\text{Kitaev}}= - J_X\sum_{X\text{-links}}
		\figbox{0.2}{fig_2bdyX.pdf}
		- J_Y\!\! \sum_{Y\text{-links}}\!\! \figbox{0.2}{fig_2bdyY.pdf}
		- J_Z\sum_{Z\text{-links}} \figbox{0.2}{fig_2bdyZ.pdf}
	\end{align}
	where qubits reside on vertices of a honeycomb lattice. The model is exactly solvable via mapping to Majorana fermions. For $J_{Z}\gg J_{X},J_{Y}>0$, its low-energy physics can be approximated by the toric code at the leading order in perturbation theory. 
	Hence, from our argument in Sec.~\ref{sec:toric}, we can deduce that the ground state of $H_{\text{Kitaev}}$ in the ``toric code phase'' satisfies $E_{t}(\Psi) > O(n)$ for constant $t$.

	Here we focus on another honeycomb Hamiltonian, denoted as $H_{\text{fermion}}$, that reflects the symmetry properties of the original $H_{\text{Kitaev}}$. 
	The starting point is an observation that the original model $H_{\text{Kitaev}}$ possesses peculiar looplike symmetry operators
	\begin{align}
		[H_{\text{Kitaev}},S_{\hexagon}]=0,\qquad S_{\hexagon} = \figbox{0.25}{fig_honeycombstab.pdf}
	\end{align}
	where $S_{\hexagon}$ can be generated by multiplying all the two-body terms on each hexagon. 
	The ground state $|\Psi_{\text{toric}}\rangle$ of $H_{\text{Kitaev}}$ in the toric code phase ($J_{Z}\gg J_{X},J_{Y}>0$) satisfies $S_{\hexagon}|\Psi_{\text{toric}}\rangle = |\Psi_{\text{toric}}\rangle$ (i.e. it is vortex-free~\cite{kitaev2006anyons}). 
	
	The honeycomb model with emergent fermions, which we address in this paper, can be formulated as a 2D stabilizer Hamiltonian
	\begin{align}
		H_{\text{fermion}} = - \sum_{\hexagon} S_{\hexagon}, \qquad S_{\hexagon} = \figbox{0.25}{fig_honeycombstab.pdf}
	\end{align}
	where qubits reside on vertices of a hexagonal lattice and $\hexagon$ are hexagonal plaquettes. A ground state of $H_{\text{fermion}}$ satisfies  $S_{\hexagon}|\Psi\rangle = |\Psi\rangle$.
	We are interested in quantifying the entanglement of wave functions that live in the ground space of $H_{\text{fermion}}$. 
	
	What are the key differences between $H_{\text{Kitaev}}$ and $H_{\text{fermion}}$?
	First of all, note that the toric code ground state $|\Psi_{\text{toric}}\rangle$ of $H_{\text{Kitaev}}$ is contained in the ground space of $H_{\text{fermion}}$.
	In fact, the ground space of $H_{\text{fermion}}$ is huge. 
	Viewed as a stabilizer code, there are $\frac{n}{2}$ stabilizer generators $S_{\hexagon}$ with one redundancy relation $\prod_{\hexagon} S_{\hexagon}=I$. Hence, the number of logical qubits is $k=1+ \frac{n}{2}$ which suggests an exponentially large ground state space. 
	Second, as a quantum error-correcting code, $H_{\text{fermion}}$ is not fault tolerant. 
	Indeed, logical operators of the code, which must commute with all $S_{\hexagon}$, are generated by two-body Pauli operators:
	\begin{align}\label{eq:XXYYZZ2}
		\figbox{0.3}{fig_2bdyX.pdf}\qquad
		\figbox{0.3}{fig_2bdyY.pdf}\qquad
		\figbox{0.3}{fig_2bdyZ.pdf}
	\end{align}
	and hence, the code distance is $d=2$.

	Because of an exponential ground state degeneracy and a small code distance, one might think that the honeycomb model has a ground state $|\Psi\rangle$ which would have a large overlap with a short-range-entangled state. 
	Contrary to this naive expectation, we will obtain a linear lower bound on $E_{t}(\Psi)$ for any state $|\Psi\rangle$ in the exponentially large ground space of the honeycomb model. 
	
	\begin{theorem}\label{thm:fermion}
		There exists an absolute constant $\alpha>0$ such that for any ground state $|\Psi \rangle$ of the honeycomb model $H_{\text{fermion}}$, if $t\leq O(\sqrt{n})$, then
		\begin{equation}
			E_{t}(\Psi) > \frac{\alpha n}{(t+1)^2} \ .
		\end{equation}
	\end{theorem}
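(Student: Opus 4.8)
The plan is to mirror the three-part structure of the proof of \cref{thm:toriccode}, replacing the anyon braiding relation \refeq{eq:braiding} with the fermionic exchange relation and the stabilizer loop $\gamma_e$ with the machinery of deformable logical strings. The starting point is the exchange relation $\langle\Psi|M_3^\dagger M_2^\dagger M_1^\dagger M_3 M_2 M_1|\Psi\rangle = -1$, which I would first establish directly on the lattice: each $M_i$ is a string of two-body operators from \refeq{eq:XXYYZZ2} and hence commutes with every $S_{\hexagon}$, so $M_i|\Psi\rangle$ stays in the ground space; the three strings end at a common qubit $v$, entering along the three link types as $X_v, Y_v, Z_v$, three mutually anticommuting Paulis, which forces $(M_3M_2M_1)^2 = -I$ and $e^{i\theta}=-1$. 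The essential structural input for deformability is that a contractible closed loop of two-body operators equals a product of hexagonal stabilizers $\prod S_{\hexagon}$; hence any two strings with the same endpoints act identically on $|\Psi\rangle$, and stabilizers may be freely commuted through logical strings since $[S_{\hexagon}, M_i] = 0$.

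First I would prove the qualitative statement $U_t|\Psi\rangle \neq |0\rangle^{\otimes n}$, the analog of \cref{lemma:TCLRE}. Assuming $|\Psi_t\rangle = U_t|\Psi\rangle = |0\rangle^{\otimes n}$, I dress the strings $\widetilde M_i = U_t M_i U_t^\dagger$, now of width $O(t)$, emanating from the junction into three disjoint arms $A_1, A_2, A_3$. Exactly as in \cref{lemma:TCLRE}, the product-state assumption together with deformability lets me localize the action of each dressed string to its own arm, $\widetilde M_i|\Psi_t\rangle = |\phi_{A_i}\rangle\otimes|0_{A_i^c}\rangle$, by routing the string into $A_i$ and projecting the untouched region onto $|0\rangle$. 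The goal is then to show that, on the product state, the deformed exchange process collapses to $+1$: deforming the second application of the strings only multiplies them by stabilizers, which commute through the logical operators and fix $|\Psi_t\rangle$, so the ground-state value stays pinned at $-1$, whereas the decoupled supports make the relevant operators commute on $|0\rangle^{\otimes n}$ and yield $+1$, a contradiction.

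Next I would upgrade this to the quantitative local bound mirroring \cref{lemma:step2toric}: for a patch $R$ of linear size $\Theta(t)$ containing a junction, if $\rho_R$ were within trace distance $\epsilon$ of $\ketbra{0_R}$, then tracking the $\epsilon$'s through the same chain of approximations—localization of the string action, replacement of $\rho_R$ by $\ketbra{0_R}$, deformation, retrace—would force the exchange expectation to lie within $O(\epsilon)$ of $+1$, contradicting its exact value $-1$. This yields an absolute gap $\|\rho_R - \ketbra{0_R}\|_1 > \epsilon$, equivalently $\mathcal{F}(\ket{0_R}, \rho_R) < 1 - \epsilon'$, independent of the patch size. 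Finally, I would tile the lattice with $\Theta(n/(t+1)^2)$ junction patches $R_i$ at mutual distance $> 2(t+1)$; since each $R_i$ is correctable and the generators are geometrically local, the decoupling $\rho_{\cup R_i} = \bigotimes_i \rho_{R_i}$ holds, and monotonicity of fidelity gives $\mathcal{F}(\ket{\Psi_t}, \ket{0^n}) \leq \prod_i \mathcal{F}(\rho_{R_i}, \ket{0_{R_i}}) < (1-\epsilon')^{\Theta(n/(t+1)^2)}$, which is exactly $E_t(\Psi) > \alpha n/(t+1)^2$.

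The main obstacle is the exchange analog of the decoupling step, and it is genuinely more delicate than in the toric code. There the closed operator $\gamma_e$ is a stabilizer, so the value $-1$ is a ground-state property that a product state simply fails to reproduce; here $(M_3M_2M_1)^2 = -I$ is an operator identity valid on every state, so the contradiction cannot come from evaluating the same operator. The resolution I would pursue is to deform the three strings into disjoint arms so that the exchange operator is no longer literally $-I$, yet still agrees with it on the ground space through the stabilizer-commutation identity $M_i S = S M_i$; the short-range-entanglement assumption then enters precisely to guarantee that this deformed, decoupled operator evaluates to $+1$. Carrying this out carefully—in particular handling the shared junction qubit, where the anticommutation originates, and verifying that the three localizations $\widetilde M_i|\Psi_t\rangle = |\phi_{A_i}\rangle\otimes|0_{A_i^c}\rangle$ can be arranged simultaneously without interfering—is where the real work lies.
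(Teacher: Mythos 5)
Your overall architecture (exchange statistics $\to$ qualitative long-range entanglement $\to$ quantitative local gap $\to$ patching) matches the paper's, and your lattice verification of $e^{i\theta}=-1$ and of deformability via $\prod S_{\hexagon}$ is correct. But there are two genuine problems. The fatal one is the last step: you assert that ``since each $R_i$ is correctable and the generators are geometrically local, the decoupling $\rho_{\cup R_i}=\bigotimes_i\rho_{R_i}$ holds.'' This is false for $H_{\text{fermion}}$: the code distance is $d=2$, the two-body logical operators of \refeq{eq:XXYYZZ2} live on every link, and no patch containing a link is correctable. Concretely, a state such as $\tfrac{1}{\sqrt2}(\mathds{1}+M)\ket{\Psi_0}$ for a long string $M$ of two-body logicals is a legitimate ground state whose reduced density matrices on distant patches do not factorize, so the product formula for the fidelity breaks down. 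The paper explicitly flags this and replaces decoupling with a sequential argument: write $\Tr(\Pi_R\rho_R)=\prod_j\mathcal{F}_j$ with conditional fidelities $\mathcal{F}_j$, observe that the normalized post-projection state $\rho^{(j)}$ of \refeq{eq:postprojection} still satisfies the $1$-form symmetry and the exchange statistics inside the untouched patch $R_{j+1}$ (because the projectors are supported away from it), and apply the local lemma to $\rho^{(j)}$ rather than to $\rho_R$. This in turn requires the local lemma to hold for \emph{any} state symmetric in the region of interest, not just for reduced ground states --- a strengthening you would need to build in.

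The second issue is that your resolution of the ``$(M_3M_2M_1)^2=-I$ is an operator identity'' problem is not workable as stated: you cannot deform the three strings into disjoint arms, since all three must terminate at the common junction (that shared endpoint is precisely where the anticommutation, hence the fermionic sign, lives). The paper's actual route is different: from the product-state assumption one gets $M_3(M_2M_1)\ket{\Psi_t}=\ket{\psi_{BD}}\ket{\psi_{AC}}\ket{0}^{\otimes(ABCD)^c}$, then uses the ground-state identity $M_3(M_2M_1)\ket{\Psi}=M_2(M_1M_3)\ket{\Psi}$ (via a stabilizer loop $M'M_2M_1$) to deduce $\ket{\psi_{BD}}\ket{\psi_{AC}}=\ket{\psi_{AD}}\ket{\psi_{BC}}$, which forces full factorization $\ket{\psi_A}\ket{\psi_B}\ket{\psi_C}\ket{\psi_D}$; a careful choice of phases then shows $M_1M_2M_3\ket{\Psi_t}=M_3M_2M_1\ket{\Psi_t}$, contradicting the operator relation $M_1M_2M_3=-M_3M_2M_1$. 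Your sketch stops short of this factorization-plus-phase-tracking step, which is where the real content of the fermionic case lies.
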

	
	\subsection{Fermion exchange statistics}
	
	A key property of the honeycomb model is that its ground space supports emergent fermions. 
	Consider an open string operator $M$ which is generated by multiplying two-body logical operators (\refeq{eq:XXYYZZ2}). 
	Let us pick an arbitrary ground state $|\Psi\rangle$ and regard it as a ``vacuum'' state with no fermions. 
	Then, $M|\Psi\rangle$ can be interpreted as a state with two emergent fermions at end points of an open string $M$. 
	A stabilizer generator $S_{\hexagon}$ implements a process of creating and annihilating a pair of fermions, hence leaving the vacuum state unchanged $S_{\hexagon}|\Psi\rangle =|\Psi\rangle$.
	These interpretations can be verified by a standard mapping of two-body generators to Majorana fermion operators as in Ref.~\cite{kitaev2006anyons}.
	
	It is worth emphasizing that a state $M|\Psi\rangle$ with a pair of fermions, as well as the vacuum state $|\Psi\rangle$, lives in the ground space of the honeycomb model.
	This is in contrast with the toric code where anyons emerge as excitations that depart from the ground space of the Hamiltonian. 
	
	An immediate challenge is that fermions have trivial braiding statistics (i.e., they are not anyons), and, thus, the argument from Sec.~\ref{sec:braiding} is not applicable.
	In the honeycomb model, this is reflected in that an open string operator $M$ always commutes with a closed string operator $S$ since $S$ is generated by stabilizer generators $S_{\hexagon}$ (see Fig.~\ref{fig:honeycoms}). 
	
	\begin{figure}[h!]
		\centering
		\subfloat{\includegraphics[width = 4cm]{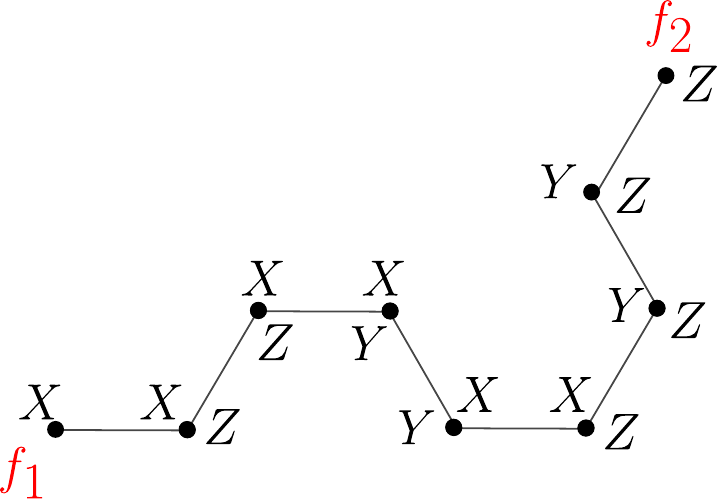}}
		\qquad \qquad 
		\subfloat{\includegraphics[width = 3cm]{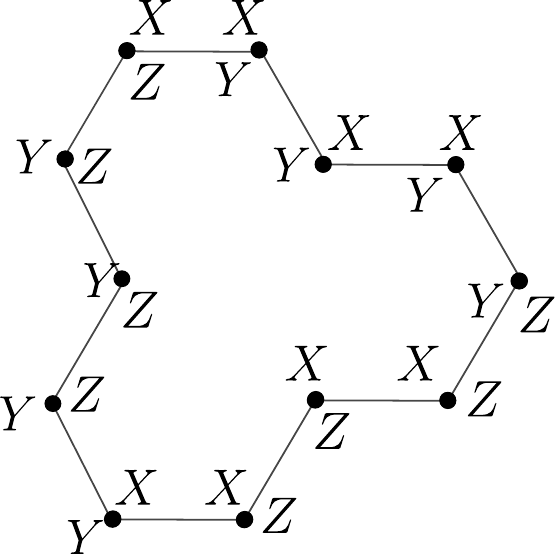}}
		\caption{(Left) An open string operator and (Right) a symmetry operator in the honeycomb model.} 
		\label{fig:honeycoms}
	\end{figure}
	
	Instead, we rely on the fermion exchange statistics to obtain a linear lower bound on $E_{t}(|\Psi\rangle)$. 
	The topological spin of two fermions can be extracted by the hopping operator algebra due to Levin and Wen~\cite{levin2003fermions}, which captures the effect of exchange via creation and annihilation processes:
	\begin{eqnarray}\label{eq:Tshape}
		\bra{\Psi}M_3^\dagger M_2^\dagger M_1^\dagger M_3M_2M_1\ket{\Psi}= e^{i \theta},
	\end{eqnarray}
	where $M_i$ are open string operators as in Fig. \ref{fig:exchange} and $e^{i \theta}=-1$ for fermions. 
	For the honeycomb model, Eq.~(\ref{eq:Tshape}) can be readily verified as three open string operators $M_1, M_2$ and $M_3$ with a common end point anticommute with each other.
	The exchange statistics is stable under deformation by a constant depth circuit $U_t$ by considering dressed string operators and quasilocal fermions. 
	
	\begin{figure}[h!]
		\centering
		\includegraphics[width = 5cm]{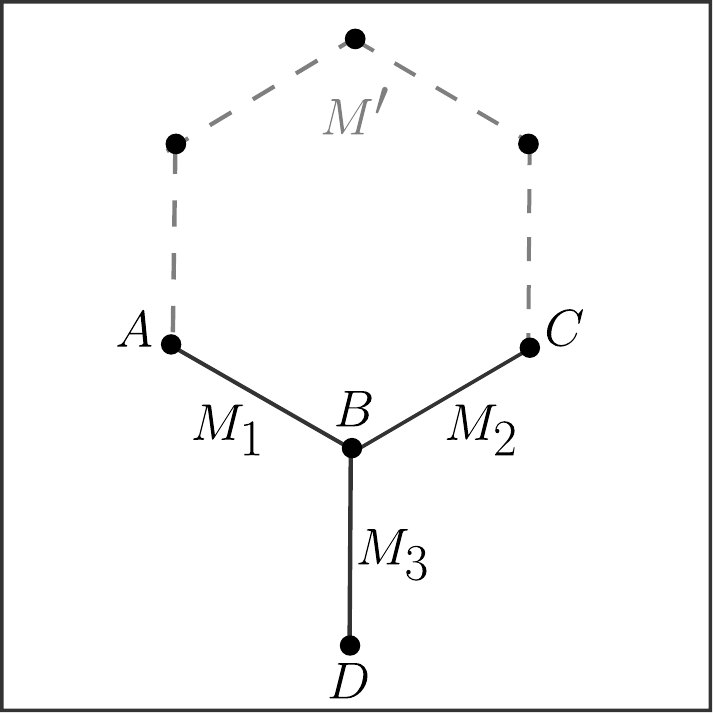}
		\caption{Exchange statistics of two fermions.}
		\label{fig:exchange}
	\end{figure}
	
	\subsection{Long-range entanglement}
	
	We first prove that a ground state of the honeycomb model is long-range-entangled. 
	
	\begin{lemma}\label{lemma:step1_fermion}
		For any ground state $|\Psi\rangle$ of the honeycomb model, we have
		\begin{align}
			U_t |\Psi\rangle \not= \ket{0}^{\otimes n} \label{eq:step1_fermion}
		\end{align}
		for any depth-$t$ unitary $U_t$ with constant $t$.
	\end{lemma}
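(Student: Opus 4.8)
The plan is to follow the proof of Lemma~\ref{lemma:TCLRE}, with the anyon braiding relation replaced by the fermion exchange relation \refeq{eq:Tshape}. Suppose, for contradiction, that $\ket{\Psi_t}\equiv U_t\ket{\Psi}=\ket{0}^{\otimes n}$, and pass to the dressed strings $\widetilde{M_i}=U_t M_i U_t^{\dagger}$, which have width $O(t)$ and, by unitary invariance of \refeq{eq:Tshape}, still obey $\bra{\Psi_t}\widetilde{M_3}^{\dagger}\widetilde{M_2}^{\dagger}\widetilde{M_1}^{\dagger}\widetilde{M_3}\widetilde{M_2}\widetilde{M_1}\ket{\Psi_t}=-1$; I drop the tildes below. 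The decisive structural input---absent in the toric code---is that each $M_i$ commutes with every stabilizer $S_{\hexagon}$, so that every partial product $M_j\cdots M_1\ket{\Psi_t}$ again lies in the ground space. Deformability of the string operators is therefore available not only on the vacuum but after each hop, which is exactly what legitimizes the rerouting used below.

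First I would establish a decoupling statement of the same form as \refeq{eq:step1decouple}. Surrounding one endpoint of a string by a local region $A$ of linear size $O(t)$, I would produce two representatives of that string supported on opposite sides of $A$ and differing by a product of hexagonal stabilizers (which fixes $\ket{\Psi_t}$), and then run the projection argument of Lemma~\ref{lemma:TCLRE} verbatim to conclude that applying the string to the product state yields a decoupled vector $\ket{\phi_A}\otimes\ket{0_{A^c}}$.

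I would then attack the exchange amplitude by reading it as an overlap, $\bra{\Psi_t}M_3^{\dagger}M_2^{\dagger}M_1^{\dagger}\,M_3 M_2 M_1\ket{\Psi_t}=\langle M_1 M_2 M_3\,\Psi_t\,|\,M_3 M_2 M_1\,\Psi_t\rangle$, so that the claim becomes that the two orderings of the three hops prepare the \emph{same} state. I would peel the hops off one at a time: at each stage the current vector is a ground state, so the next string may be deformed, while the decoupling confines each freshly created fermion to its own patch. Using this freedom I would deform each hop around the decoupled patches of the other fermions until reordering no longer incurs a relative sign; then the two orderings coincide, the overlap is forced to $+1$, and this contradicts the value $-1$.

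The main obstacle is this last step. The fermionic $-1$ is a state-independent consequence of the mutual anticommutation of the strings where they meet, so no fixed choice of representatives can remove it; the contradiction can arise only by using the product-state decoupling to pull the strings apart, a deformation that is topologically obstructed on a genuine ground state but becomes available once the fermions are decoupled local excitations. In contrast to the toric code there is no stabilizer loop to serve as the detector $\gamma_e$---an open fermion string never fixes the vacuum, while a closed one is a stabilizer and commutes with all logical operators---so the detector must be synthesized from the remaining two hops. Verifying that the rerouting which trivializes the exchange differs from the original configuration only by stabilizers, and is therefore consistent with deformability, is the delicate point on which the whole argument turns.
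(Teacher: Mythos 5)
Your setup coincides with the paper's: assume $U_t\ket{\Psi}=\ket{0}^{\otimes n}$, pass to dressed strings, establish the endpoint decoupling as in \refeq{eq:step1decouple}, and aim to show that the two orderings $M_3M_2M_1\ket{\Psi_t}$ and $M_1M_2M_3\ket{\Psi_t}$ prepare the same state, contradicting \refeq{eq:Tshape}. The gap is exactly the step you yourself flag as ``the delicate point'': the mechanism you propose there --- deforming the hops around the decoupled patches ``until reordering no longer incurs a relative sign'' --- is the one you correctly rule out two sentences earlier, since the $-1$ is a state-independent consequence of operator anticommutation at the shared endpoint and survives any rerouting by stabilizers. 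The paper does not trivialize the exchange by rerouting. It extracts more from the decoupling than you do: pairwise decoupling gives $M_3(M_2M_1)\ket{\Psi_t}=\ket{\psi_{BD}}\ket{\psi_{AC}}\ket{0}^{\otimes(ABCD)^c}$ as in \refeq{eq:44}, but the decisive extra step is the four-way factorization \refeq{eq:product}. This is obtained by constructing a string $M'$ from $A$ to $C$ such that $M'M_2M_1$ is a product of hexagonal stabilizers, deducing $M_3(M_2M_1)\ket{\Psi}=(M_2M_1)M_3\ket{\Psi}$, decomposing the right-hand side with the \emph{other} pairing as $\ket{\psi_{AD}}\ket{\psi_{BC}}\ket{0}^{\otimes(ABCD)^c}$, and concluding from the two incompatible bipartitions in \refeq{eq:47} that each of $A,B,C,D$ is individually unentangled from the others. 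Nothing in your proposal produces this factorization; the decoupling lemma alone only separates each \emph{pair} of endpoints from the rest, not the two members of a pair from each other.

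Once the factorization is in hand, the contradiction is phase bookkeeping rather than geometry: fixing phases so that $M_1\ket{\Psi_t}=\ket{\psi_A}\ket{\psi_B}\ket{0}^{\otimes(AB)^c}$, $M_2M_1\ket{\Psi_t}=\ket{\psi_A}\ket{\psi_C}\ket{0}^{\otimes(AC)^c}$ and $M_3\ket{\Psi_t}=\ket{\psi_B}\ket{\psi_D}\ket{0}^{\otimes(BD)^c}$, one reads off the local action of $M_2$ on $\ket{\psi_B}\otimes\ket{0}$ and evaluates both orderings to the identical vector $\ket{\psi_A}\ket{\psi_B}\ket{\psi_C}\ket{\psi_D}\ket{0}^{\otimes(ABCD)^c}$. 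The anticommutation at the meeting point is never ``removed''; it is shown to be invisible when acting on a fully product state. Without the factorization step and this phase argument, your proof does not close.
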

	
	\begin{proof} 
		Suppose that Eq.~(\ref{eq:step1_fermion}) is false, namely $|\Psi_{t}\rangle \equiv U_t |\Psi\rangle = \ket{0}^{\otimes n}$. 
		Let $M_1,M_2$ and $M_3$ be dressed open string operators satisfying \refeq{eq:Tshape}.
		Using the deformability of string operators, one obtains
		\begin{eqnarray}\label{eq:44}
			M_3 (M_2M_1) |\Psi_t \rangle = |\psi_{BD} \rangle |\psi_{AC} \rangle |0\rangle^{\otimes (ABCD)^c},
		\end{eqnarray}
		where $|\psi_{BD} \rangle$ and $|\psi_{AC} \rangle$ are some wave functions supported on $BD$ and $AC$. 
		
		A crucial observation is that both $|\psi_{BD} \rangle$ and $|\psi_{AC} \rangle$ are factorized:
		\begin{align}
			|\psi_{BD} \rangle|\psi_{AC} \rangle
			= |\psi_{B} \rangle |\psi_D \rangle 
			|\psi_{A} \rangle |\psi_C \rangle.  \label{eq:product}
		\end{align}
		To show this\footnote{
			If string operators can be implemented via short-depth circuits (as in the honeycomb model), this statement can be immediately proven. 
			Namely, since the neighborhoods of regions $B$ and $D$ are initially unentangled, $B$ and $D$ cannot be entangled via a short-depth circuit that implements $M_3$.
		},
		let us construct a deformed string operator $M'$ which connects $A$ and $C$ as in Fig.~\ref{fig:exchange},
		such that the loop operator $M'M_2M_1$ is a product of some stabilizer generators $S_{\hexagon}$,
		and, thus, commutes with all string operators.
		We then have:
		\begin{eqnarray}
			&&M_3 (M_2M_1)\ket{\Psi} \cr
			&=& {M'}^\dagger M' M_3 (M_2M_1)\ket{\Psi} = {M'}^\dagger M_3 (M'M_2M_1)\ket{\Psi} \cr
			&=& {M'}^\dagger (M'M_2M_1) M_3 \ket{\Psi} = (M_2M_1) M_3 \ket{\Psi}.
		\end{eqnarray}
		We can, however, write the right-hand side as $M_2 (M_1M_3) |\Psi_t \rangle$ and decompose it as $|\psi_{AD}\rangle |\psi_{BC} \rangle |0\rangle^{\otimes (ABCD)^c}$ similar to \refeq{eq:44}, implying:
		\begin{align}\label{eq:47}
			|\psi_{BD} \rangle|\psi_{AC} \rangle = |\psi_{AD}\rangle |\psi_{BC} \rangle.
		\end{align}
		The right-hand side implies $B$ and $D$ are unentangled, and so are $A$ and $C$.
		This shows that $|\psi_{BD}\rangle$ and $|\psi_{AC}\rangle$ must be factorized as in Eq.~(\ref{eq:product}). 
		
		By properly choosing the phases of $\ket{\psi_B}$, $\ket{\psi_C}$ and $\ket{\psi_D}$ in order, we can assume
		\begin{equation}
			\begin{aligned}
				M_1 |\Psi_t \rangle &=|\psi_A \rangle |\psi_B \rangle |0\rangle^{\otimes (AB)^c},\\
				M_2M_1 |\Psi_t \rangle &=|\psi_A \rangle |\psi_C \rangle |0\rangle^{\otimes (AC)^c},\\
				M_3 |\Psi_t \rangle &=|\psi_B \rangle |\psi_D \rangle |0\rangle^{\otimes (BD)^c}.
			\end{aligned}
		\end{equation}
		Using the first two equations,
		we obtain $M_2 |\psi_B \rangle |0\rangle^{\otimes (AB)^c} = |\psi_C \rangle |0\rangle^{\otimes (AC)^c}$.
		Therefore, 
		\begin{eqnarray}
			&&M_1 M_2 M_3 |\Psi_t \rangle \cr
			&=& M_1 M_2 |\psi_B \rangle |\psi_D \rangle |0\rangle^{\otimes (BD)^c}       
			= M_1 |\psi_C \rangle |\psi_D \rangle |0\rangle^{\otimes (CD)^c} \cr
			&=&|\psi_A \rangle |\psi_B \rangle |\psi_C \rangle |\psi_D \rangle |0\rangle^{\otimes (ABCD)^c} \cr
			&=& M_3(M_2M_1)|\Psi_t \rangle
		\end{eqnarray}
		which contradicts with \refeq{eq:Tshape} with $e^{i \theta} = -1$. 
	\end{proof}
	
	\subsection{Overlap with short-range-entangled state}
	
	We can prove the following upper bound on the maximum overlap between $\rho_R$ and $|0_R\rangle$. 
	Similar to Lemma \ref{lemma:step2toric}, it shows a ``local GEM" that remains bounded away from zero by an absolute gap that is independent of the subregion size.
	
	\begin{lemma}\label{lemma:step2fermion}
		Let $R$ be any patch of linear size $\geq 8(t+1)$. 
		Given any deformed honeycomb model ground state $|\Psi_t \rangle = U_t |\Psi\rangle$  with a depth-$t$ circuit $U_t$, 
		there exists an absolute constant $\epsilon >0$ such that, in terms of the trace norm,   
		\begin{equation}\label{eq:fermioncbound}
			\big\lVert\rho_R-\ketbra{0_R}\big\rVert_{1}>\epsilon,
		\end{equation}
		where $\rho_{R} = \Tr_{R^c}(|\Psi_t \rangle \langle \Psi_t|)$, and, accordingly, 
		\begin{eqnarray}
			{\cal F}(\ket{0_R}, \rho_{R}) < 1- \frac{1}{4} \epsilon^2=1-\epsilon'
		\end{eqnarray}
		for an absolute constant $\epsilon' > 0$.
	\end{lemma}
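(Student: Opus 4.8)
The plan is to establish this lemma as the quantitative, spatially local counterpart of Lemma~\ref{lemma:step1_fermion}, in exact analogy with the way Lemma~\ref{lemma:step2toric} upgrades Lemma~\ref{lemma:TCLRE} for the toric code. I argue by contradiction: suppose \refeq{eq:fermioncbound} fails, so that $\lVert\rho_R-\ketbra{0_R}\rVert_1\le\epsilon$ for a small $\epsilon$. Because the dressed string operators $M_1,M_2,M_3$ and the auxiliary deformed string $M'$ used in the proof of Lemma~\ref{lemma:step1_fermion} all have width $O(t)$, the entire T-shaped configuration of \reffig{fig:exchange} fits inside a region of linear size $O(t)$; choosing $R$ of linear size $\ge 8(t+1)$ guarantees that all of these operators, and the four ``legs'' $A,B,C,D$, are supported in $R$. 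In particular the exchange operator $\mathcal{W}\equiv M_3^\dagger M_2^\dagger M_1^\dagger M_3M_2M_1$ is supported in $R$, so \refeq{eq:Tshape} reads $\Tr(\rho_R\mathcal{W})=-1$.

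The two ingredients of Lemma~\ref{lemma:step1_fermion} transfer to the density-matrix setting as follows. First, every \emph{deformability/loop relation} holds \emph{exactly} at the level of $\rho_R$: if an operator $S$ supported in $R$ stabilizes $|\Psi_t\rangle$ (e.g.\ $S=M'M_2M_1$, a product of dressed stabilizers), then $S\rho_R=\rho_R=\rho_R S^\dagger$, since $S$ commutes with the partial trace over $R^c$. Combined with $\rho_R\approx\ketbra{0_R}$ this gives $\lVert S\ketbra{0_R}S^\dagger-\ketbra{0_R}\rVert_1\le 2\epsilon$, so $S$ also \emph{approximately} fixes the product reference state. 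Second, every \emph{factorization relation}---that a string operator creates localized, mutually unentangled fermions, as in \refeq{eq:44} and \refeq{eq:product}---is recovered approximately by substituting $\ketbra{0_R}$ for $\rho_R$ and exploiting the genuine product structure of $|0_R\rangle$, exactly as the $\Pi^{\text{up}},\Pi^{\text{down}}$ projection argument does in the proof of Lemma~\ref{lemma:step2toric}.

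With these two ingredients I would run the operator chain of Lemma~\ref{lemma:step1_fermion} on $\rho_R$ while tracking trace-norm errors: the loop relation lets me commute $M'$ and the stabilizer $M'M_2M_1$ past $M_3$ at no cost, while each replacement of $\rho_R$ by $\ketbra{0_R}$ (or vice versa) and each use of an approximate factorization costs $O(\epsilon)$. Telescoping these steps---mirroring \refeq{eq:31}---converts the left-hand side of \refeq{eq:Tshape} into $\Tr(\rho_R\mathcal{W})=+1\pm O(\epsilon)$, which, against the exact value $-1$, forces $\epsilon\ge\epsilon_0$ for an absolute constant $\epsilon_0>0$. The fidelity bound then follows immediately from the Fuchs--van de Graaf inequality: for the pure state $\ket{0_R}$ one has $1-\mathcal{F}(\ket{0_R},\rho_R)\ge\tfrac14\lVert\rho_R-\ketbra{0_R}\rVert_1^2>\tfrac14\epsilon^2$, so $\mathcal{F}<1-\epsilon'$.

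I expect the main obstacle to be the quantitative factorization step. In the exact proof, comparing the two bipartite product forms $\ket{\psi_{BD}}\ket{\psi_{AC}}$ and $\ket{\psi_{AD}}\ket{\psi_{BC}}$ forces the full four-fold product \refeq{eq:product}; approximately, I need a stability statement of the shape ``a state that is $\epsilon$-close to a product across two complementary bipartitions is $O(\epsilon)$-close to the common finer product,'' which I would prove by a Schmidt-decomposition and triangle-inequality argument in trace norm. The secondary difficulty is bookkeeping: the chain here is longer than in the toric case, so I must verify that the accumulated $O(\epsilon)$ errors remain below the gap between $+1$ and $-1$, thereby pinning down an explicit admissible constant $\epsilon_0$ independent of the size of $R$.
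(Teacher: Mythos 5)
Your proposal follows essentially the same route as the paper's Appendix~\ref{app:prove6} proof: argue by contradiction, promote each exact step of Lemma~\ref{lemma:step1_fermion} to an $O(\epsilon)$-approximate statement in trace norm (with the approximate factorization across the two bipartitions $AC|BD$ and $AD|BC$ being the key technical step, which the paper handles via a purity argument equivalent to your Schmidt-decomposition plan), and finish with Fuchs--van de Graaf. The one point to be careful about, which the paper makes explicit by distinguishing trace-norm closeness $\approx$ from vector-norm closeness $\sim$, is that trace-distance approximations between pure states are phase-blind, so to contradict the $-1$ in Eq.~(\ref{eq:Tshape}) (rather than merely concluding that the inner product has modulus close to $1$) you must fix phase conventions and track vector-norm errors in the final step, exactly as in the exact proof of Lemma~\ref{lemma:step1_fermion}.
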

	
	The proof can be readily obtained by making the proof of Lemma~\ref{lemma:step1_fermion} more quantitative with each equation replaced with $\approx$ in terms of the trace distance (see the Appendix for the full proof). 
	We emphasize that, similar to Lemma \ref{lemma:step2toric}, all we need to assume on $\rho_R$ is the deformability of string operators (resulting from $1$-form symmetry) and the fermionic exchange statistics (resulting from the anomalous nature of symmetry). 
	In particular, the same result, with the same constant $\epsilon'$, would apply to any state that is invariant under $S_\hexagon$ in the region of interest.

	Finally, we prove a linear lower bound on $E_{t}(\Psi)$ for a honeycomb model ground state by adding up the ``local GEM". 
	Let us divide the whole system into multiple regions $R_{j}$ with $R=\cup_{j=1}^m R_j$. 
	Unlike the toric code, the honeycomb model does not satisfy the decoupling property as in \refeq{eq:decoupling}, since the code distance is small. 
	Instead, we proceed by considering postprojection states sequentially and bound the total projection amplitude.  
	
	Let $\Pi_{j}$ be an operator that projects $\mathcal{H}_{R_{j}}$ to $|0_{R_{j}}\rangle$. 
	Recall that
	\begin{align}
		\mathcal{F}(|\Psi_t\rangle, \ket{0}^{\otimes n}) 
		\leq 
		\Tr( \Pi_R \rho_R ).
	\end{align}
	We can express the right-hand side as (define $\Pi_0 = I$)
	\begin{align}
		\Tr( \Pi_R \rho_R ) = \prod_{j=1}^m \mathcal{F}_j, \quad
		\text{where~}
		\mathcal{F}_j \equiv \frac{\Tr(\Pi_j \cdots \Pi_1 \rho_R  )}{ \Tr(\Pi_{j-1} \cdots \Pi_1 \rho_R ) }.
	\end{align}
	We claim that $\mathcal{F}_j < 1- \epsilon'$ for some $\epsilon' > 0$ and $\forall~1\leq j\leq m$. 
	To prove this, we define a normalized postprojection state by 
	\begin{align}\label{eq:postprojection}
		\rho^{(j)} \equiv \frac{\Pi_{j}\cdots \Pi_1\rho_R\Pi_1 \cdots\Pi_{j}}{\Tr (\Pi_{j}\cdots \Pi_1 \rho_R)},~~(0\leq j\leq m-1),
	\end{align}
	where $\mathcal{H}_{R_1},\cdots, \mathcal{H}_{R_{j}}$ have been projected to $|0\rangle$'s.
	
	Crucially, we observe that since $\cup_{i=1}^j R_j$ does not intersect $R_{j+1}$, the normalized postprojection state $\rho^{(j)}$ still satisfies the deformability of the string operators and the fermion exchange statistics inside the patch $R_{j+1}$, namely
	\begin{align}
		\Tr( M_3^\dagger M_2^\dagger M_1^\dagger M_3 M_2 M_1 \rho^{(j)} ) = -1
	\end{align}
	whenever the $M$s are supported in $R_{j+1}$. 
	Therefore, we can apply Lemma~\ref{lemma:step2fermion} to $\rho^{(j)}$ and region $R_{j+1}$ to obtain
	\begin{equation}
		\mathcal{F}_{j+1} = \mathcal{F}(\rho^{(j)},\Pi_{j+1} )< 1-\epsilon',
	\end{equation}
	where $\epsilon'$ is $j$ indepedent.
	Multiplying everything together, we arrive at
	\begin{align}
		\mathcal{F}(|\Psi_t\rangle, \ket{0}^{\otimes n}) < (1-\epsilon')^{\frac{cn}{(t+1)^2}},
	\end{align}
	which proves Thm~\ref{thm:fermion}. 
	
	\subsection{Mixed state}\label{sec:mixedstate}
	
	Quantum circuit complexity of preparing a mixed state has been studied in quantum information theory in the context of topological quantum memory at finite temperature. Recently, there has been renewed interest in classifications of mixed-state phases in the condensed matter community. Here, we discuss the implication of our results from the perspective of mixed-state phases. 
	
	Following Hastings~\cite{hastings2011topological}, we say that a mixed state $\rho$ is short-range-entangled if $\rho$ can be expressed as a probabilistic ensemble of short-range-entangled states:
	\begin{align}
		\rho = \sum_{j}p_{j}|\psi_j\rangle \langle \psi_j|, \qquad |\psi_j\rangle = U_j |0\rangle^{n}. 
	\end{align}
	For a technical reason, it is conventional to assume that the depth of $U_j$ is at most $t\sim O(\mathrm{polylog}(n))$ in defining short-range entanglement in $\rho$. 
	Previous works showed that two-dimensional commuting projector Hamiltonians and three-dimensional fracton models have short-range-entangled Gibbs states according to this definition~\cite{hastings2011topological,siva2017topological}.
	
	We are interested in the quantum circuit complexity of the maximally mixed state $\rho$ that satisfies $\Tr (\rho S_{\hexagon}) = 1$. 
	For any decomposition $\rho = \sum_{j}p_{j}|\psi_j\rangle \langle \psi_j|$, $\Tr (\rho S_{\hexagon}) = 1$ and $||S_{\hexagon}||=1$ would imply
	\begin{align}
		S_{\hexagon}|\psi_j \rangle= 1 
	\end{align}
	for all $j$. It follows from Theorem \ref{thm:fermion} that each $|\psi_j\rangle$ is long-range-entangled. 
	Hence, we arrive at the following result. 
	
	\begin{corollary} 
		Any mixed state $\rho$ in the honeycomb model $H_{\text{fermion}}$ ground state subspace is long-range-entangled. 
	\end{corollary}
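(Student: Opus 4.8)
The plan is to argue by contradiction, using Theorem~\ref{thm:fermion} as a black box: it already guarantees that every \emph{pure} state in the ground space is long-range-entangled with $E_t=\Omega(n)$, and the task is to promote this pure-state statement to the mixed-state definition of short-range entanglement. First I would unpack what it means for $\rho$ to lie in the ground-state subspace. Writing the ground-space projector as $P=\prod_{\hexagon}\tfrac12(I+S_{\hexagon})$, the condition $P\rho P=\rho$ is equivalent to the family of linear constraints $\Tr(\rho\,S_{\hexagon})=1$ for every hexagon $\hexagon$; this equivalence holds because each $S_{\hexagon}$ is a Hermitian involution with $\|S_{\hexagon}\|=1$, so saturating its expectation value is the same as projecting $\rho$ into the $+1$ eigenspace. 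I would record this reformulation first, since the stabilizer constraints are the convenient handle for what follows.

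Next I would suppose, toward a contradiction, that $\rho$ is short-range-entangled in the sense of Hastings, i.e. $\rho=\sum_j p_j\ketbra{\psi_j}$ with $p_j>0$, $\sum_j p_j=1$, and each $\ket{\psi_j}=U_j\ket{0}^{\otimes n}$ for a circuit $U_j$ of depth at most $t\sim O(\mathrm{polylog}(n))$. The key step is a convexity/extremality observation applied one hexagon at a time: from $1=\Tr(\rho\,S_{\hexagon})=\sum_j p_j\bra{\psi_j}S_{\hexagon}\ket{\psi_j}$ together with the pointwise bound $\bra{\psi_j}S_{\hexagon}\ket{\psi_j}\le\|S_{\hexagon}\|=1$, the weighted average can equal $1$ only if every term saturates, i.e. $\bra{\psi_j}S_{\hexagon}\ket{\psi_j}=1$ for all $j$ with $p_j>0$. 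Since $S_{\hexagon}^2=I$, this upgrades to $S_{\hexagon}\ket{\psi_j}=\ket{\psi_j}$. Running this argument over all hexagons simultaneously shows that each individual $\ket{\psi_j}$ is a genuine ground state of $H_{\text{fermion}}$.

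Finally I would invoke Theorem~\ref{thm:fermion}: since each $\ket{\psi_j}$ is a ground state and the polylog depth budget lies well inside the regime $t\le O(\sqrt{n})$ where the theorem applies, we obtain $E_t(\psi_j)=\Omega(n)>0$, so $\ket{\psi_j}$ cannot be written as $U_j\ket{0}^{\otimes n}$ for any depth-$t$ circuit $U_j$. This contradicts the assumed decomposition, proving that $\rho$ is long-range-entangled. I expect there to be no hard obstacle here, only two pieces of bookkeeping at the ends of this chain: verifying that ``$\rho$ supported in the ground space'' is genuinely equivalent to the saturated stabilizer constraints (so that the convexity argument has something to bite on), and confirming that the depth permitted in the mixed-state definition of short-range entanglement stays within the window of validity of Theorem~\ref{thm:fermion}. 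The extremal-decomposition step and the appeal to the pure-state bound are otherwise immediate.
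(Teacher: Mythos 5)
Your proposal is correct and follows essentially the same route as the paper: decompose $\rho$ into pure states, use saturation of $\Tr(\rho S_{\hexagon})=1$ together with $\|S_{\hexagon}\|=1$ to force $S_{\hexagon}\ket{\psi_j}=\ket{\psi_j}$ for every component, and then apply Theorem~\ref{thm:fermion} to each $\ket{\psi_j}$. The extra bookkeeping you flag (the equivalence of the projector condition with the stabilizer constraints, and the depth budget fitting inside $t\le O(\sqrt{n})$) is handled implicitly in the paper and poses no obstacle.
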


	Finally, it will be useful to generalize the notion of depth-$t$ GEM to mixed states. 
	A natural generalization is to consider the maximum fidelity between the state of interest and an ensemble of short-range-entangled states
	\begin{align}
		E_{t}(\rho) = -  \max_{\sigma \in \text{SRE}(t)} \log_2 \mathcal{F}(\rho,\sigma).
	\end{align}
	Here $\text{SRE}(t)$ denotes the set of mixed states that are representable as ensembles of depth-$t$ short-range-entangled pure states.

	\begin{corollary}
		For a symmetric mixed state $\rho$ in the honeycomb model, if $t\leq O(\sqrt{n})$, then 
		\begin{equation}
			E_{t}(\rho) > \frac{\alpha n}{(t+1)^2},
		\end{equation}
		where $\alpha >0$ is an absolute constant. 
	\end{corollary}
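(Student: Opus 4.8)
The plan is to lift the sequential-projection argument of Theorem~\ref{thm:fermion} to the mixed-state setting by passing to a purification, so that the fidelity between $\rho$ and an arbitrary short-range-entangled ensemble becomes an overlap of two pure states on an enlarged Hilbert space, to which the fermion-statistics machinery of Lemma~\ref{lemma:step2fermion} applies essentially verbatim. First I would fix an arbitrary $\sigma\in\mathrm{SRE}(t)$ and write $\sigma=\sum_j p_j\ketbra{\psi_j}$ with $\ket{\psi_j}=U_j\ket{0}^{\otimes n}$, $U_j$ of depth $t$. Introduce an environment $E$, set $\ket{\chi}_E=\sum_j\sqrt{p_j}\ket{j}$ and $\ket{\Theta}:=\ket{0}_S^{\otimes n}\otimes\ket{\chi}_E$, and use the canonical purification $\ket{\Phi_\sigma}=\sum_j\sqrt{p_j}\ket{\psi_j}_S\ket{j}_E=V\ket{\Theta}$, where $V=\sum_j U_j\otimes\ketbra{j}_E$ is an environment-controlled depth-$t$ circuit. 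The hypothesis $\Tr(\rho S_\hexagon)=1$ together with $S_\hexagon^2=I$ forces $S_\hexagon\rho=\rho$, so $\rho$ is supported on the common $+1$ eigenspace of all $S_\hexagon$.

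A convenient observation is that, because each $S_\hexagon$ acts trivially on $E$ while any two purifications of $\rho$ differ only by a unitary on $E$, \emph{every} purification $\ket{\Phi_\rho}$ of $\rho$ satisfies $(S_\hexagon\otimes I_E)\ket{\Phi_\rho}=\ket{\Phi_\rho}$. By Uhlmann's theorem $\sqrt{\mathcal F(\rho,\sigma)}=\max_{\ket{\Phi_\rho}}|\langle\Phi_\rho|\Phi_\sigma\rangle|=\max_{\ket{\Phi_\rho}}|\langle\Psi^\star|\Theta\rangle|$, where $\ket{\Psi^\star}:=V^\dagger\ket{\Phi_\rho}$; hence it suffices to bound $|\langle\Psi^\star|\Theta\rangle|$ uniformly over all (necessarily symmetric) purifications.

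The next step is to transfer the honeycomb structure through conjugation by $V$. Since $V$ is block-diagonal in the $E$ basis and geometrically local of depth $t$ on $S$ within each block, the dressed operators $\tilde S_\hexagon=V^\dagger(S_\hexagon\otimes I)V=\sum_j(U_j^\dagger S_\hexagon U_j)\otimes\ketbra{j}$ and $\tilde M_i=\sum_j(U_j^\dagger M_i U_j)\otimes\ketbra{j}$ have system support confined to the $O(t)$-neighborhood of the originals and act diagonally on $E$. The state $\ket{\Psi^\star}$ is invariant under every $\tilde S_\hexagon$, the deformability identity conjugates to $\tilde M'\tilde M_2\tilde M_1=$ a product of dressed stabilizers, and the bare operator relation $M_3^\dagger M_2^\dagger M_1^\dagger M_3M_2M_1=-S$ (with $S$ a product of stabilizers) conjugates to give $\langle\Psi^\star|\tilde M_3^\dagger\tilde M_2^\dagger\tilde M_1^\dagger\tilde M_3\tilde M_2\tilde M_1|\Psi^\star\rangle=-1$. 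These are exactly the two ingredients---deformability and fermion exchange statistics---that the remark following Lemma~\ref{lemma:step2fermion} isolates as sufficient, so I would invoke its proof with the dressed operators to conclude that, for any system patch $R$ of linear size $\ge 8(t+1)$, the system-reduced state $\rho^\star_R=\Tr_{(SE)\setminus R}\ketbra{\Psi^\star}$ obeys $\mathcal F(\rho^\star_R,\ket{0_R})<1-\epsilon'$ with the same absolute $\epsilon'$.

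Finally I would run the telescoping step of Theorem~\ref{thm:fermion} on the pure state $\ket{\Psi^\star}$ against the product target $\ket{\Theta}$. Partition $S$ into $m=\Theta(n/(t+1)^2)$ patches $R_j$ of size $8(t+1)$ separated by more than $2(t+1)$, let $\Pi_j$ project $R_j$ onto $\ket{0_{R_j}}$ (trivially on $E$), and bound $\mathcal F(\rho,\sigma)\le\Tr(\Pi_R\rho^\star_R)=\prod_j\mathcal F_j$ with the usual ratios $\mathcal F_j$. Because each $\Pi_i$ is supported on system qubits while the dressed $\tilde S_\hexagon,\tilde M_i$ localized deep inside a later patch avoid the already-projected patches (their separation exceeds twice the $O(t)$ dressing width), the projections commute with them; hence every post-projection state remains dressed-symmetric in the next patch and each factor satisfies $\mathcal F_j<1-\epsilon'$ by the previous step. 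Multiplying gives $\mathcal F(\rho,\sigma)<(1-\epsilon')^{m}$ uniformly in $\sigma$, whence $E_t(\rho)=-\max_\sigma\log_2\mathcal F(\rho,\sigma)>\frac{\alpha n}{(t+1)^2}$. I expect the main obstacle to be the third step: verifying that entangling the dressed symmetries with the non-geometric environment does not spoil the locality and decoupling that Lemma~\ref{lemma:step2fermion} relies on. This hinges precisely on $V$ being block-diagonal on $E$ and local on $S$, so that all support-counting and commutation arguments take place entirely on the system.
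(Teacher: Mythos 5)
Your reduction via Uhlmann's theorem is a legitimate starting point (every purification of $\rho$ is indeed $S_\hexagon$-symmetric, and the controlled circuit $V$ is block-diagonal on $E$), but the pivotal third step --- ``invoke the proof of Lemma~\ref{lemma:step2fermion} with the dressed operators'' --- has a genuine gap, and it sits exactly where you flag your own worry. The proof of Lemma~\ref{lemma:step2fermion} is not an expectation-value argument: it hinges on showing that the fermion end-point excitations factorize into mutually decoupled \emph{pure} states $\ket{\psi_A}\ket{\psi_B}\ket{\psi_C}\ket{\psi_D}$, so that a global phase can be tracked and played off against the operator identity $M_1M_2M_3=-M_3M_2M_1$. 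Your dressed operators $\tilde M_i=\sum_j(U_j^\dagger M_iU_j)\otimes\ketbra{j}_E$ all act on the \emph{same} environment register, a single non-geometric degree of freedom shared by every patch and every end point. Starting from $\ket{\Psi^\star}\approx\ket{0_R}\otimes\ket{\eta}_{R^cE}$, applying $\tilde M_2\tilde M_1$ produces end-point excitations on $AC$ that are correlated with $E$; applying $\tilde M_3$ then produces $j$-dependent excitations on $BD$ correlated with the same $E$, hence with $AC$. The purity step ($\phi_A\otimes\phi_D\approx\ket{\psi}\bra{\psi}_{AD}$) fails, and with it the phase bookkeeping. Block-diagonality of $V$ guarantees locality \emph{on the system} but does nothing to prevent this environment-mediated entanglement between end points.

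The statement you need is nevertheless true, and the correct way to see it also shows the entire telescoping apparatus is unnecessary: writing $\ket{v_j}=\langle j|\Phi_\rho\rangle$ (an unnormalized vector in the symmetric subspace, with $\sum_j\lVert v_j\rVert^2=1$), you have $\sqrt{\mathcal F(\rho,\sigma)}=|\langle\Phi_\rho|V|\Theta\rangle|=\bigl|\sum_j\sqrt{p_j}\,\langle v_j|U_j|0^{\otimes n}\rangle\bigr|$, and Theorem~\ref{thm:fermion} applied to each normalized $\ket{v_j}/\lVert v_j\rVert$ together with Cauchy--Schwarz gives the bound in one line. This is essentially what the paper does, except it avoids purifications altogether: it applies the syndrome-measurement channel to $\rho$ and $\sigma$, uses monotonicity of fidelity to reduce to $\Tr(\Pi_S\sigma)=\sum_jp_j\Tr(\Pi_S\ketbra{\psi_j})$, and notes that $\Tr(\Pi_S\ketbra{\psi_j})=|\langle\Psi_j|\psi_j\rangle|^2$ with $\ket{\Psi_j}\propto\Pi_S\ket{\psi_j}$ a symmetric pure state to which Theorem~\ref{thm:fermion} applies verbatim. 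Either decomposition --- of $\rho$ over the environment basis, or of $\sigma$ over its ensemble --- turns the corollary into a short consequence of the pure-state theorem; re-deriving the local lemma for environment-coupled operators is both unjustified as written and unneeded.
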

	
	\begin{proof}
		Given mixed state $\rho$, consider a quantum channel $\mathcal{N}$ that simultaneously measures stabilizer generators $S$ and records the outcomes.
		Denote the outcome probability distributions as $P_{\rho}(s_{1},\cdots, s_{n/2})$ where $s_{j}=\pm1$ are syndrome values of $S_{\hexagon}$. 
		Define $P_\sigma$ similarly.
		Because of the monotonicity of the fidelity under quantum channels, we have 
		\begin{align}
			\mathcal{F}(\rho,\sigma) \leq \mathcal{F}(\mathcal{N}(\rho),\mathcal{N}(\sigma)) = \mathcal{F}(P_{\rho},P_{\sigma}).
		\end{align} 
		Assuming that $\rho$ is a symmetric state: $\Tr( S_{\hexagon}\rho)=1$, we have $P_{\rho}(+1,\cdots,+1)=1$ and $P_{\rho}(s_1,\cdots,s_{n/2})=0$ for any other $(s_1,\cdots,s_{n/2})$.
		Assuming $\sigma \in \text{SRE}(t)$, we can write it as $\sigma = \sum_{j}p_{j}|\psi_j\rangle \langle \psi_j|$ where $\ket{\psi_j}$, are short-range entangled.
		Then, we have 
		\begin{equation}\label{eq:64}
			\begin{aligned}
				\mathcal{F}(P_{\rho},P_{\sigma}) =& P_{\sigma}(+1,\cdots, +1) = \Tr(\Pi_S \sigma) \\
				=&\sum_{j}p_j \Tr( \Pi_S |\psi_j\rangle\langle \psi_j| ),
			\end{aligned} 
		\end{equation}
		where $\Pi_S$ is a projection to the symmetric subspace\footnote{One can get the same inequality by considering the projection measurement $\{\Pi_S,1-\Pi_S\}$.}. 
		
		Finally, we claim 
		\begin{align}
			\Tr( \Pi_S |\psi_j\rangle\langle \psi_j| )< e^{-\Theta(n)}. \label{eq:mixed-fermion}
		\end{align}
		To prove this, consider a normalized symmetric pure state 
		\begin{align}
			|\Psi_j\rangle = \frac{\Pi_S|\psi_j\rangle}{\sqrt{ \langle \psi_j | \Pi_S| \psi_j \rangle }}. 
		\end{align}
		It follows from Theorem~\ref{thm:fermion} that $|\langle \Psi_j | \psi_j\rangle|^2 < e^{-\Theta(n)}$, which leads to Eq.~\eqref{eq:mixed-fermion}. 
		The desired inequality
		\begin{align}
			\mathcal{F}(\rho,\sigma) < e^{-\Theta(n)}
		\end{align}
		then follows from \refeq{eq:64}.
	\end{proof}
	
	\section{Discussions}\label{sec:discussion}
	
	We comment on a few open problems. 
	
	\begin{enumerate}[1)]
		
		\item A similar quantitative characterization of entanglement may be obtained for symmetry-protected topological order by considering an overlap with trivial wave functions that are prepared by short-depth symmetric circuits (i.e. symmetry-protected GEM). 
		
		\item The GEM can quantitatively distinguish genuine long-range entanglement from ``trivial'' long-range entanglement. For instance, we have $E_t(\text{toric})\sim \Theta(n)$ and $E_t(\text{GHZ})\sim O(1)$.
		One potential manifestation of different GEM scaling is (in)stability of long-range entanglement under local decoherence. 
		It will be interesting to make this intuition concrete by establishing a connection between fault tolerance and GEM. 
		
		\item Studies of depth-$t$ GEM for critical systems (those described by conformal field theories) may also be an interesting research avenue. 
		
		\item Ground states of a topologically ordered Hamiltonian can be generically prepared by $O(L)$-depth quantum circuits, and, as such, have $E_{t}(\Psi)\approx 0$ for $t\sim \Omega(L)$. 
		In a system of $n$ qubits, however, there exist quantum states whose preparation requires $e^{O(n)}$-depth quantum circuits. 
		An example of a quantum state with exponential complexity is a Haar random state. 
		It will be interesting to study the behavior of $E_{t}(\Psi)$ for states prepared from local random unitary circuits. 
		
		\item Recently, there has been significant progress in understanding noninvertible symmetries. It is interesting to study which classes of noninvertible symmetries require long-range entanglement. 
		In fact, we expect that some aspects of noninvertible symmetries and anomalous symmetries are fundamentally akin to each other and can be treated on an equal footing.
		For example, one can implement some anomalous symmetries locally by quantum channels, in a spirit similar to Ref.~\cite{okada2024non}.

	\end{enumerate}

	\begin{acknowledgments} 
		We thank Timothy Hsieh, Han Ma, Zi-Wen Liu, and Chong Wang for useful discussions and comments. 
		We especially thank Sergey Bravyi for collaboration on a related work \cite{bravyi2024much} and many illuminating discussions.
		Research at Perimeter Institute is supported in part by the Government of Canada through the Department of Innovation, Science and Economic Development and by the Province of Ontario through the Ministry of Colleges and Universities. 
		This work is supported by the Applied Quantum Computing Challenge Program at the National Research Council of Canada.

		\textit{Note added}: While we are finalizing the manuscript, version 2 of Ref.~\cite{wang2023intrinsic} appeared on arXiv in May 2024.
		This updated version establishes that a mixed state with anomalous $1$-form symmetry exhibits long-range entanglement.
		Version 1 of Ref.~\cite{wang2023intrinsic}, which appeared on arXiv in July 2023, does not contain this result. 
	\end{acknowledgments} 
	
	\bibliography{anyonbib.bib}
	
	\appendix
	\section{Proof of Lemma \ref{lemma:step2fermion}}\label{app:prove6}
	Here, we prove Lemma \ref{lemma:step2fermion}, copied below for convenience.
	\begin{lemma_copy}{lemma:step2fermion}
		Let $R$ be any patch of linear size $\geq 8(t+1)$. 
		Given any deformed honeycomb model ground state $|\Psi_t \rangle = U_t |\Psi\rangle$  with a depth-$t$ circuit $U_t$, 
		there exists an absolute constant $\epsilon >0$ such that, in terms of the trace norm,   
		\begin{equation}
			\big\lVert\rho_R-\ketbra{0_R}\big\rVert_{1}>\epsilon,
		\end{equation}
		where $\rho_{R} = \Tr_{R^c}(|\Psi_t \rangle \langle \Psi_t|)$, and, accordingly, 
		\begin{eqnarray}
			{\cal F}(\ket{0_R}, \rho_{R}) < 1- \frac{1}{4} \epsilon^2=1-\epsilon'
		\end{eqnarray}
		for an absolute constant $\epsilon' > 0$.
	\end{lemma_copy}
	
	\begin{proof}
		Similar to the proof of Lemma \ref{lemma:step2toric}, we use $\approx$ to denote closeness in the trace distance
		\begin{equation}
			\rho \approx \sigma \iff \norm{\rho-\sigma}_1< O(\epsilon).
		\end{equation}
		For pure states, we use two different notions of closeness:
		\begin{equation}
			\begin{aligned}
				\psi_1\approx \psi_2 &\iff \norm{\psi_1\rangle \langle \psi_1 | - |\psi_2 \rangle \langle \psi_2|}_1 < O(\epsilon),\\
				|\psi_1 \rangle \sim |\psi_2 \rangle &\iff \norm{|\psi_1 \rangle - |\psi_2 \rangle  } < O(\epsilon).
			\end{aligned}
		\end{equation}
		Here, the norm in the second line is the vector norm in Hilbert space. 
		While $\approx$ ignores the phase information, the approximation $\sim$ cares about the phases.
		It is clear that $\psi_1\approx \psi_2$ if and only if we can choose phases so that $\psi_1\sim \psi_2$.

		Now, assuming $\rho_R\approx \ket{0_R}\bra{0_R}$ in a region $R$, let us derive a contradiction. In the following, $M_i$ are dressed operators, and we omit the subscript $t$.

		First, the same as Eq.~(\ref{eq:puredecomposition}) and its proof, a string operator creates two excitations at end points which are approximately decoupled from the rest of the system:
		\begin{equation}\label{eq:66}
			M_3\ket{0_R} \approx |\psi\rangle_{BD} |0\rangle^{\otimes (BD)^c}.
		\end{equation}
		With a similar equation for $M_2M_1$, we get the approximate version of \refeq{eq:44}:
		\begin{equation}\label{eq:approx1}
			\begin{aligned}
				M_3 (M_2 M_1) \ket{0_R}
				\approx &M_3 |\psi \rangle_{AC} |0\rangle^{\otimes (AC)^c} &\\
				\approx &|\psi \rangle_{AC} |\psi\rangle_{BD} |0\rangle^{\otimes (ABCD)^c}.
			\end{aligned}
		\end{equation}
		Here, in the second approximation, we use $M_3 (|0\rangle \langle 0|)^{\otimes (AC)^c} M_3^\dagger = \Tr_{AC} (M_3 |0_R\rangle \langle 0_R | M_3^\dagger ) $ and \refeq{eq:66}.
		Also, similarly, we have
		\begin{equation}\label{eq:approx2}
			M_2(M_1M_3)\ket{0_R} \approx |\psi \rangle_{BC} |\psi \rangle_{AD} |0\rangle^{\otimes (ABCD)^c}.
		\end{equation}

		Next, we show that both $|\phi \rangle_{AC}$ and $|\phi\rangle_{BD}$ approximately factorize.
		From $\rho_R\approx\ket{0_R}\bra{0_R}$ and $M_3M_2M_1 \rho_R= M_2M_1M_3 \rho_R$, we get 
		$M_3M_2M_1 \ket{0_R}\approx M_2M_1M_3 \ket{0_R}$.
		Now taking $\text{Tr}_{BC}$ on Eqs.~(\ref{eq:approx1}) and (\ref{eq:approx2}) and using the monotonicity of the trace distance under partial trace, we get: 
		\begin{eqnarray}
			\phi_{A} \otimes \phi_D \approx \ket{\psi}\bra{\psi}_{AD},
		\end{eqnarray}
		where $\phi_A = \text{Tr}_C (|\psi \rangle\langle \psi|_{AC})$ and $\phi_D = \text{Tr}_B (|\psi \rangle\langle \psi|_{BD})$. 
		It follows that $\phi_A$ and $\phi_D$ are approximately pure states as measured by the purity\footnote{
			For any two states $\sigma_1$ and $\sigma_2$, we have 
			$\Tr(\sigma_1^2-\sigma_2^2)\leq 2\norm{\sigma_1-\sigma_2}_\infty \leq 2\norm{\sigma_1-\sigma_2}_1$.
		}:
		\begin{equation}
			\Tr\phi_A^2\cdot\Tr\phi_D^2 =\Tr\left((\phi_A \otimes \phi_D)^2 \right)=1-O(\epsilon),
		\end{equation}
		which further implies that they are close to pure states in the trace norm\footnote{
			For any $\sigma$, denote the eigenvalues as $\lambda_i$, then $\lambda_{\max}=\sum_i\lambda_{\max}\lambda_i\geq\sum_i\lambda_i^2=\Tr(\sigma^2)$. 
			Hence $\norm{\sigma-P_{\max}}_1=2(1-\lambda_{\max})\leq 2(1-\Tr(\sigma^2))$ where $P_{\max}$ is the projector to the eigenspace corresponding to the largest eigenvalue.
		}.
		Therefore, we arrive at the approximate version of \refeq{eq:product}:
		\begin{align}
			M_2 M_1 M_3 \ket{0_R}
			\approx& M_3 M_2 M_1 \ket{0_R}
			\\
			\approx& |\psi \rangle_A |\psi \rangle_C |\psi \rangle_B |\psi \rangle_D |0\rangle^{\otimes (ABCD)^c}. \notag
		\end{align}

		Finally, we can derive a contradiction by considering the phase.
		The proof is the same as that of Lemma \ref{lemma:step1_fermion}. 
		We simply replace $=$ by $\sim$, which keeps the phase information.    
	\end{proof}
	\clearpage

\end{document}